\newcommand{\lyxaddress}[1]{
	\par {\raggedright #1
	\vspace{1.4em}
	\noindent\par}
}
\theoremstyle{plain}
\newtheorem{thm}{\protect\theoremname}[section]
\theoremstyle{plain}
\newtheorem{lem}[thm]{\protect\lemmaname}
\newenvironment{proof}[1][\protect\proofname]{\par
	\normalfont\topsep6\p@\@plus6\p@\relax
	\trivlist
	\itemindent\parindent
	\item[\hskip\labelsep\scshape #1]\ignorespaces
}{%
	\endtrivlist\@endpefalse
}
\providecommand{\proofname}{Proof}
\theoremstyle{remark}
\theoremstyle{remark}
\newtheorem{rem}[thm]{\protect\remarkname}
\theoremstyle{plain}
\newtheorem{prop}[thm]{\protect\propositionname}
\newtheorem{Definition}{Definition}
\providecommand{\claimname}{Claim}
\providecommand{\lemmaname}{Lemma}
\providecommand{\propositionname}{Proposition}
\providecommand{\remarkname}{Remark}
\providecommand{\theoremname}{Theorem}
\begin{document}
\global\long\def\norm#1{\left\Vert #1\right\Vert }%
\title{Two-Point Vortex Confinement in a simply connected domain}
\author{Slim Ibrahim$^{a}$, Ruixun Qin$^{b}$ and Shengyi Shen$^{b}$}
\maketitle

\lyxaddress{\begin{center}
$^{a}$ Department of Mathematics \& Statistics, University of Victoria\\
3800 Finnerty Rd, Victoria, BC V8P 5C2\\
ibrahims@uvic.ca\\
$^{b}$ School of Mathematics and Statistics, Ningbo University, Ningbo 315211,
P. R. China\\
qinruixunmath@outlook.com; shenshengyi@nbu.edu.cn
\par\end{center}}
\begin{abstract}
This paper investigates the vortex confinement property of the two-point vortex system in a planar domain. We compute the time over which initial point vortices around a stable stationary point remain within a slightly larger ball. In particular, we show that this concentration persists indefinitely regardless of the vorticity strengths. In the borderline of the stability condition, we show that this time becomes a power law, if in addition, one  relaxes the size of the stability ball. 
\end{abstract}

\section{Introduction}
The motion of an inviscid and incompressible fluid inside a simply connected domain $\Omega\subseteq\mathbb{R}^{2}$ is governed by the Euler equations 
\begin{eqnarray*}
\partial_t u + u \cdot \nabla u = -\nabla p\quad\mbox{in}\quad\Omega\times\mathbb{R}_+\\
u\cdot n=0\quad\mbox{on}\quad\partial\Omega\times\mathbb R_+\\ \text{div } u(t, \cdot) = 0\quad\mbox{in}\quad\Omega\times\mathbb R_+,
\end{eqnarray*}
where the unknowns are the velocity field of fluid particles ${\bf u}: \mathbb{R}_+ \times \Omega \to \mathbb{R}^2$, the fluid pressure $p : \mathbb{R}_+ \times \Omega \to \mathbb{R}$. The vector ${n}$ is the outward normal vector to $\partial\Omega$. Defining the vorticity as $\omega:=\partial_{1}u_{2}-\partial_{2}u_{1}$, the vorticity formulation of the initial value problem for the 2D Euler equations reads as 
\begin{equation}
\begin{cases}
\partial_{t}\omega\left(x,t\right)+u\left(x,t\right)\cdot\nabla\omega\left(x,t\right)=0, & \quad\mbox{in}\quad\Omega\times\mathbb{R}_+\\
\omega\left(x,0\right)=\omega_{0}\left(x\right), &\quad\mbox{in}\quad\Omega\\
\nabla\cdot u\left(x,t\right)=0, & \quad\mbox{in}\quad\Omega\times\mathbb{R}_+\\
u\left(x,t\right)\cdot{n}=0, & \quad\mbox{on}\quad\partial\Omega\times\mathbb R_+
\end{cases}\label{eq:sys_euler}
\end{equation}
The vorticity $\omega(t)$ can be seen as a rearrangement of the initial vorticity $\omega_0$ through the flow map defined by the velocity field $u$.\\
Recall that the velocity can be reconstructed from the vorticity thanks to the Biot-Savart law
\[
u(x, t) = \int_\Omega \nabla_x^\perp G_\Omega(x, y) \omega(y, t) \, dy,
\]
where $G_\Omega$ is the Green's function of $\Omega$ given by
\begin{equation}
G_\Omega(x, y) = \frac{1}{2\pi} \ln |x-y| + \gamma_\Omega(x, y),
\end{equation}
and $\nabla^\perp G:=(-\partial_2G,\partial_1G)$. 
Here $\gamma_\Omega(x, y)$ is a smooth symmetric (potential) function (i.e the smooth part of the Green's function).
In the case of unbounded domain, velocity $u$ should vanish at infinity.
Define also the Robin function:
\[
\tilde{\gamma}_\Omega(x) = \gamma_\Omega(x, x).
\]
Thanks to the Riemann mapping (Chapter 6 \cite{stein2010complex}) and the characterization of automorphisms of the unit disc, a conformal mapping 
$$
\varphi:\Omega\to \mathbb D,\quad\mbox{ biholomorphic and such that for some } x_0\in\Omega^\circ, \varphi(x_0)=0
$$
exists and is unique up to rotations. Thanks to this mapping, one can express the Green's function $G_\Omega$ as 
\[
G_\Omega(x, y) = G_{\mathbb D}(\varphi(x), \varphi(y)),
\]
where the Green's and potential functions for the disc are explicitly given by
\begin{eqnarray}
\label{greens for disc}
G_{\mathbb D}(x, y) = \frac{\ln |x - y|}{2\pi} - \frac{\ln |x - y^*||y|}{2\pi},\quad \gamma_{\mathbb D}(x, y) = -\frac{\ln |x - y^*||y|}{2\pi}.
\end{eqnarray}
Here \( y^* = \frac{y}{|y|^2} \) is the inverse of the point $y$ relative to the unit circle. This allows to re-express Robin's function of a general simply connected domain as
\[
\gamma_\Omega(x, y) = \gamma_{\mathbb D}(\varphi(x), \varphi(y)) + \frac{1}{2\pi} \ln \frac{|\varphi(x) - \varphi(y)|}{|x - y|}.
\]
Now assume that the initial vorticity is concentrated at a single point vortex $\omega_0=\delta_{x_0}$. Then formally the corresponding solution is $\omega(t)=\delta_{x(t)}$, where the motion of $x(t)$ is governed by
\begin{eqnarray*}
    \label{onepntvortex}
x'(t)=\nabla^\perp\tilde\gamma_{\Omega}(x(t)).
\end{eqnarray*}

\begin{Definition}
The point $x_0 \in \Omega$ is called a stationary point if for all time $t$, $x(t)=x_0$.
\end{Definition}
Observe that stationary points are characterized as critical points of the Robin's function. Due to the asymptotic \cite{gustafsson1979motion}
\[
\tilde{\gamma}_\Omega(x) = -\frac{1}{2\pi} \ln d(x, \partial \Omega) + O(1) \quad \text{as } x \to \partial \Omega,
\]
such stationary points always exist. However, their uniqueness strongly depends on additional properties of the geometry of the domain. Stationary points can also be characterized using conformal mappings. Indeed, Gustafsson \cite{gustafsson1979motion} and \cite{donati2021long} proved that $x_0$ is stationary if and only if $\varphi''(x_0)=0$.

In recent years, the longtime behavior of solutions to the 2D Euler equations has regained significant attention mainly in three distinct research directions.

The first and most classical direction initiated by Arnold's \cite{Arnold} involves the search for Lyapunov functionals to study stability in the spirit of Arnold's work. These functionals are crucial tools for understanding the stability properties of solutions and have deep connections to the conservation laws and the geometric structure of the flow.

The second research direction focuses on the development of an understanding of "simple states" that are expected to emerge after a long time. These states include steady states, traveling waves, and time-periodic \cite{garcia2024time} (or quasi-periodic) solutions \cite{berti2023time}. Such solutions often represent the asymptotic behavior of the flow and provide insight into the types of patterns that can persist indefinitely.

The third direction aims to understand the behavior near these "simple states" and the mechanisms that drive perturbations to relax back to the class of "simple states." A major breakthrough in this area was the work of Bedrossian and Masmoudi \cite{BedrossianMasmoudi}, introducing the concept of inviscid damping. More recently, the stability of the twisting properties of Euler's flow near shear flows was established in \cite{drivas2024twisting}. Its possible implications to filamentation or infinite time relaxation still remain open.\\  
At the same time, several works studied the behavior of solutions with concentrated vorticity  around stable stationary points investigating for how long such structures can persist. However, most of the works are based on fluids inside domains with a lot of symmetries including $\mathbb T^2$, $\mathbb R^2$, and the unit disc $\mathbb D$, where the effects of the boundary is either non-existent or ``uniform".
In the absence of the boundary, external forces and viscosity effects, Buttà and Marchioro \cite{M1103725}, analyzed how long a concentrated 
vorticity can be maintained. More precisely, letting $\omega\left(x,t\right)$ be the solution
of $\text{\ensuremath{\left(\ref{eq:sys_euler}\right)}}$ with sign-definite initial vorticity blobs concentrated at $N$ disjoint discs $D(z_j,\varepsilon)$, and as in \cite{donati2021long}, defining
$\tau_{\varepsilon,\beta}$ to be the first time the vorticity support exits a much larger disks of radius $\varepsilon^{\beta}<1$:

\[
\tau_{\varepsilon,\beta}:=\sup\left\{ t\geq0,\ \forall s\in\left[0,t\right],\ \text{supp}\omega\left(\cdot,t\right)\subset\cup_{i=1}^{N}D\left(z_{i}\left(s\right),\varepsilon^{\beta}\right)\right\} .
\]
It is shown in \cite{M1103725,donati2021long} that 
for each $\beta\in\left(0,1/2\right)$ there exist $\varepsilon_{0}>0$
and $\zeta_{0}>0$ such that $\tau_{\varepsilon,\beta}>\zeta_{0}\left|\ln\varepsilon\right|$,
$\forall\varepsilon<\varepsilon_{0}$. In addition, if $\Omega=\mathbb D$ and
$\omega_{0}$ satisfying the condition above with $N=1$ and $z_{1}=0$,
then we can obtain that for every $\beta\in\left(0,1/2\right)$ there
exists $\varepsilon_{0}>0$ and $\alpha>0$ such that $\tau_{\varepsilon,\beta}>\varepsilon^{-\alpha}$,
$\forall\varepsilon<\varepsilon_{0}$. Both estimates were obtained
by Buttà and Marchioro \cite{M1103725}. The first result provides
logarithmic confinement for the entire plane. The second result is
more restrictive, but it is valid for the unit disk with a single
point vortex at its center, and the conclusion is stronger as it yields
a power-law confinement. Later on, Donati and Iftimie \cite{donati2021long}
extended the result to non-disk domains but still with a special hypothesis that the third derivative of the conformal mapping at the stationary point vanishes. This includes domains with $m$-fold symmetric ($m\ge3$) and it is more restrictive than the natural stability condition.
The main purpose of our paper is to study general domains without rotational symmetries.

The point vortex problem is a system of ODEs that was introduced as an approximation of the 2D Euler's system. Helmholtz \cite{helmholtz1858integrale} first introduced the N-point point vortex
system, as a particular solution to $\left(\text{\ref{eq:sys_euler}}\right)$:
\begin{equation}
\forall1\leq k\leq N,\quad\frac{d}{dt}z_{k}\left(t\right)=\sum_{j\neq k}^{N}a_{j}\nabla_{1}^{\bot}G_{\Omega}\left(z_{k}\left(t\right),z_{j}\left(t\right)\right)+a_{k}\frac{1}{2}\nabla^{\bot}\tilde{\gamma}_{\Omega}\left(z_{k}\left(t\right)\right)\label{eq:Kirchhoff-Routh_eq}
\end{equation}
where $\left(a_{k}\right)_{1\leq k\leq N}$ represent the strengths of the
point vortices $\left(z_{k}\left(t\right)\right)_{1\le k\le N}$ and $\tilde{\gamma}_\Omega$
denotes the Robin function of the domain $\Omega$. The equation \eqref{eq:Kirchhoff-Routh_eq} is also called
Kirchhoff-Routh equations \cite{kirchhoff1876vorlesungen,routh1905advanced}.
The point vortex model serves as an approximation for the vortex structure
within an incompressible fluid by discretizing the vortex region into
a finite number of point vortexes, each characterized by a specific
strength (vorticity).

Poincaré \cite{poincare1893theorie} analyzed the dynamics of vortices
between 1891 and 1893, focusing on the motion and interaction of point
vortices. In 1994, Marchioro and Pulvirenti \cite{marchioro2012mathematical}
analyzed the point vortex model in details, including its mathematical
derivation, behavior in different fluid dynamics scenarios, and comparisons
with other fluid dynamics models. They established a rigorous mathematical
foundation for the point vortex model and investigated the behavior
of point vortices in the presence of boundaries. Saffman \cite{saffman1995vortex}
mainly discussed the dynamic behavior and stability of the point vortex
model in 1995. He analyzed the interaction, stability and motion characteristics
of point vortices in fluids, including the formation of vortex streets
and their associated instabilities.

A critical aspect of studying point vortex system is its convergence
toward the global solution of the Euler equations. Essentially, the
question is whether the solution to a point vortex system can become
increasingly close to the global solution of the incompressible Euler
equations as the number of point vortices increases. As $N$ approaches
infinity, if the velocity field of the point vortex system converges
to the weak solution of the Euler equations in the distribution sense,
it is termed weak convergence. Geldhauser and Romito \cite{geldhauser2019point}
demonstrated that under suitable scaling conditions, the velocity
field of the point vortex system can indeed converge weakly to the
Euler equations' weak solution. More information can be found in \cite{liu2001convergence}
and \cite{ceci2021vortex}. Strong convergence, in contrast, necessitates
pointwise convergence within a specific function space. This form
of convergence imposes stricter demands, such as increased smoothness
of the initial conditions and more rigorous control over the distribution
of point vorticities, as detailed in Rosenzweig \cite{rosenzweig2022mean}. 

The point vortex system $\left(\text{\ref{eq:Kirchhoff-Routh_eq}}\right)$
fails when the points $\left(z_{k}\left(t\right)\right)_{1\leq k\leq N}$
are not unique or leave the domain $\Omega$. Although configurations
exist that can lead to point vortex collapse at the interior or the boundary of the domain, they are the exceptions,
see \cite{durr1982vortex}, \cite{marchioro1984vortex}, \cite{marchioro2012mathematical}
for the proof of the case of the ring surfaces, the disk, and the
plane, respectively. Indeed, Donati Martin \cite{Donati2022} shows the global existence of $N$ point vortex system in a bounded 2D domain for almost every initial data.

We are interested in the so-called point vortex confinement problem
with positive vortex strengths $\left(a_{k}\right)_{1\leq k\leq N}>0$.

Assuming that $\left(z_{k}\left(t\right)\right)_{1\leq k\leq N}$
are the solution of $\left(\text{\ref{eq:Kirchhoff-Routh_eq}}\right)$,
based on the above definition of $\tau_{\varepsilon,\beta}$, we define
an exit time $t_{\varepsilon,\beta}$ of the point vortex system:

\begin{align}
t_{\varepsilon,\beta} & :=\sup\left\{ t\geq0,\ \forall s\in\left[0,t\right],\ \left|z_{k}\left(s\right)-x_{0}\right|<\varepsilon^{\beta},\ k=1,\ldots,N\right\} ,\beta<1,\nonumber \\
t_{\varepsilon,1} & :=\sup\left\{ t\geq0,\ \forall s\in\left[0,t\right],\ \left|z_{k}\left(s\right)-x_{0}\right|<\mu\varepsilon,\ k=1,\ldots,N\right\} ,\text{for some constant }\mu>0.\label{def:exit_time}
\end{align}

Donati and Iftimie \cite{donati2021long} established that a single
point vortex is stable under the condition $2|T'(0)|^{3}>|T'''(0)|$.
Essentially, this condition means $B\left(t\right)$ the center of
point vortices is near a rotation when the initial values $\left(z_{k}\left(0\right)\right)_{1\le k\le2}$
are close to $x_0$ (see $\left(\text{\ref{eq:reason_B_rotation}}\right)$).
If $\left|B\right|$ is small enough, the point vortex confinement
becomes the stability analysis near the critical point. We
see that the motion of $z_{1}\left(t\right)$ and $z_{2}\left(t\right)$ can be
controlled by their centre of vorticity $B\left(t\right)$ and their distance  $\xi\left(0\right)$. Indeed, the initial values $B\left(0\right)$
and $\xi\left(0\right)$ are determined by the initial values of
$\left(z_{k}\left(0\right)\right)_{1\le k\le2}$. 

By~\eqref{tilde_quantity}, one might be tempted to apply the Morse lemma near $\xi=\mathbf{0}$.  
However, the system involves four variables and is degenerate at $\xi=\mathbf{0}$, so the Morse lemma is not applicable.  \\
Instead, we expect that the celebrated KAM theory~\cite{Kolmogorov1954} (see also the modern exposition~\cite{Hubbard2004} and the tutorials~\cite{Wayne1994,de2003tutorial}) can be applied to show that, for almost every sufficiently small pair $(\xi,B)$, the solution is quasi-periodic. Rewriting the system in terms of $(\xi,B)$ reveals that the unperturbed Hamiltonian is singular: the determinant of its Hessian matrix vanishes for all $(\xi,B)$. Fortunately, by expanding the full Hamiltonian to higher order, one recovers a non-degenerate unperturbed Hamiltonian, to which KAM theory becomes applicable.  \\
Another interesting feature of our Hamiltonian is that it takes the form
\[
c \ln(|\xi|^2) + \text{(quadratic terms)} + \text{(higher-order terms)}.
\]
The additional logarithmic term introduces a singularity. From the dynamical systems viewpoint, however, this can be beneficial: the logarithm generates a very strong rotational effect, which may dominate and suppress potential instabilities arising from linear or higher-order terms.  \\
To illustrate this mechanism, we present a toy model. Let $x=(x_1,x_2)$ and consider the following ODE:

$$
\dot{x}=\left(\begin{array}{cc}
1 & 0\\
0 & -1
\end{array}\right)x+\frac{x^{\perp}}{\left|x\right|^{2}}.
$$
Its Hamiltonian is $H=\ln(|x|)-x_1x_2$, and the linear part indicates that there is an unstable direction. However, one can apply the Morse Lemma to the conserved quantity 
$$e^{2H}=x_{1}^{2}+x_{2}^{2}+h.o.t$$
and obtain the periodic solution for small initial data.



This paper extends the investigation to the stability of a two-point
vortex system near zero, with the following main result:

\begin{thm}\label{main result}
Given a simply connected domain $\Omega\subsetneq\mathbb{R}^2$ with $0\in \Omega^\circ$ the stationary point.
Let $\varphi:\Omega\to\mathbb D$ be biholomorphic such that $\varphi(0)=0$ and the stability condition 
$$2\left|\varphi^{\prime}(0)\right|^{3}>\left|\varphi^{\prime\prime\prime}(0)\right|
$$
is satisfied. Then there exists $0<\varepsilon_{0}\ll1$, and a constant $\mu>0$ such that
for any $0<\varepsilon\le\varepsilon_{0}$, if the vorticity strengths are such that $a_1+a_2\neq0$, then with the possible
exception of a set of Lebesgue measure zero, for almost every the initial positions of the point vortices satisfying
\begin{equation}
\left|z_{k}\left(0\right)\right|<\varepsilon,\quad k=1,2,\label{eq:initial_value}
\end{equation}
the solutions $z_{k}\left(t\right)$ of \eqref{eq:Kirchhoff-Routh_eq} satisfy $\left|z_{k}\left(t\right)\right|<\mu \varepsilon$, for any $t>0$ and
$k=1,2$ i.e. $t_{\varepsilon, 1}=+\infty$ .
\end{thm}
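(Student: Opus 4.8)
The plan is to realise the two--vortex Kirchhoff--Routh system \eqref{eq:Kirchhoff-Routh_eq} as a Hamiltonian system with two degrees of freedom and to confine its orbits by the invariant tori furnished by KAM theory. Writing $z_k=x_k+iy_k$ and equipping phase space with the weighted symplectic form $a_1\,dx_1\wedge dy_1+a_2\,dx_2\wedge dy_2$, the system is generated by
$$\mathcal H=-a_1a_2\,G_\Omega(z_1,z_2)-\tfrac12 a_1^2\,\tilde\gamma_\Omega(z_1)-\tfrac12 a_2^2\,\tilde\gamma_\Omega(z_2).$$
First I would pass to centre--of--vorticity and relative coordinates $B=(a_1z_1+a_2z_2)/a$ and $\xi=z_1-z_2$, where $a=a_1+a_2$; the hypothesis $a_1+a_2\neq0$ is exactly what makes this substitution admissible, and a short computation shows that the symplectic form decouples as $a\,dB_1\wedge dB_2+\tfrac{a_1a_2}{a}\,d\xi_1\wedge d\xi_2$, so that $B$ and $\xi$ are (rescaled) canonical pairs. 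Splitting off the logarithm in $G_\Omega$ and using that $0$ is a critical point of the Robin function $\tilde\gamma_\Omega$ (equivalently $\varphi''(0)=0$), the Taylor expansion of $\mathcal H$ about $(\xi,B)=(\mathbf 0,\mathbf 0)$ takes the advertised shape $c\ln(|\xi|^2)+Q(\xi,B)+(\text{h.o.t.})$, with $c=-a_1a_2/4\pi$ and $Q$ a quadratic form whose coefficients are explicit in $\varphi'(0)$ and $\varphi'''(0)$ via \eqref{greens for disc}. The stability condition $2|\varphi'(0)|^3>|\varphi'''(0)|$ enters here to force $Q$ to be sign--definite, so that the leading motion is a product of two genuine rotations rather than a hyperbolic saddle.

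I would then introduce action--angle variables $(I_\xi,\theta_\xi,I_B,\theta_B)$ adapted to these rotations, with $I_\xi\sim|\xi|^2$, $I_B\sim|B|^2$, and rescale $\xi=\varepsilon\hat\xi$, $B=\varepsilon\hat B$ together with time. Under this scaling $\ln|\xi|^2=\ln|\hat\xi|^2+\mathrm{const}$ stays $O(1)$ while the quadratic and higher terms collapse to an $O(\varepsilon^2)$ perturbation: this is the mechanism announced in the introduction, whereby the logarithm produces a fast, dominant rotation in $\xi$ that pins $I_\xi$, while $B$ precesses slowly. The confinement will follow from this pinning: on an invariant torus through a datum with $|z_k(0)|<\varepsilon$ one has $I_\xi\sim\varepsilon^2$ frozen for all time, whence $|\xi|\lesssim\varepsilon$, and then energy conservation forces $Q$ to stay $O(\varepsilon^2)$, hence $|B|\lesssim\varepsilon$ and finally $|z_k(t)|=|B+O(\xi)|<\mu\varepsilon$, i.e. $t_{\varepsilon,1}=+\infty$. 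It is worth stressing that energy conservation alone is insufficient, since the energy surface permits $B$ to drift to $O(1)$ while $\xi$ merely readjusts; the second invariant carried by the torus is what is truly needed.

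The main obstacle is the degeneracy of the unperturbed integrable Hamiltonian $\mathcal H_0(I_\xi,I_B)=\tfrac{c}{2}\ln I_\xi+\kappa I_\xi+\omega_B I_B+\mathrm{const}$: being only linear in $I_B$ at quadratic order, its action Hessian is
$$\begin{pmatrix}-c/(2I_\xi^2) & 0 \\ 0 & 0\end{pmatrix},$$
whose determinant vanishes identically, so Kolmogorov's twist condition fails — exactly the defect flagged in the introduction. Following the strategy signalled there, I would remedy this by carrying the expansion of the $B$--dependent part to fourth order, producing a genuine amplitude--frequency term $\tfrac12 c_2 I_B^2$; absorbing it into $\mathcal H_0$ renders the Hessian determinant equal to $-c\,c_2/(2I_\xi^2)$, nonzero as soon as $c_2\neq0$. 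Proving $c_2\neq0$ — that the precession frequency of $B$ genuinely depends on its amplitude at this order — is the heart of the matter and is where the higher derivatives of $\varphi$ at $0$, beyond those appearing in the stability condition, must be exploited.

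With iso--energetic non--degeneracy secured, I would apply the KAM theorem to $\mathcal H=\mathcal H_0+O(\varepsilon^2)$ in the rescaled variables, obtaining a positive--measure Cantor family of invariant $2$--tori on which $I_\xi,I_B=O(\varepsilon^2)$ remain frozen, and on each of which the orbit is confined as above. Because the system has two degrees of freedom, these $2$--tori separate every $3$--dimensional energy level; this upgrades the KAM large--measure statement to the trapping of each orbit lying in a resonant gap between two neighbouring tori, leaving as the only possible exception a set of data of Lebesgue measure zero (containing the collision configurations $\{\xi(0)=\mathbf 0\}$, where $\mathcal H$ is singular). The genuinely delicate technical point throughout is the singular, $\varepsilon$--dependent character of the problem: one must verify that the KAM smallness and Diophantine requirements survive the blow--up of the $\xi$--frequency as $\xi\to\mathbf 0$, which is precisely why the rescaling and the logarithmic structure have to be used in tandem.
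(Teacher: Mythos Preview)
Your overall architecture matches the paper's: pass to $(\xi,B)$, recognise the Hamiltonian as $c\ln I_\xi+\omega I_B+\cdots$, observe that this unperturbed piece has identically vanishing Hessian determinant, and cure the degeneracy by promoting a higher-order angle-averaged term into the integrable part before invoking KAM. The crucial divergence is \emph{which} higher-order term you promote, and here your proposal has a real gap.

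You plan to use the coefficient $c_2$ of $I_B^2$ and say that establishing $c_2\neq 0$ ``is where the higher derivatives of $\varphi$ at $0$, beyond those appearing in the stability condition, must be exploited.'' But if $c_2$ genuinely depended on $\varphi^{(4)}(0),\varphi^{(5)}(0),\dots$, then for suitably chosen domains those derivatives could be tuned to make $c_2=0$, and your argument would collapse for precisely those $\Omega$---whereas the theorem is claimed for \emph{every} $\Omega$ satisfying only $2|\varphi'(0)|^3>|\varphi'''(0)|$. The paper sidesteps this by promoting instead the cross term $I_\xi^2 I_B$; a (Mathematica-assisted) computation of its angle-averaged coefficient $C_{4,2}$ shows that it depends \emph{only} on $\varphi'(0)$ and $\varphi'''(0)$, and in fact simplifies to
\[
C_{4,2}=\tfrac{3}{2}\,a\,|\varphi'(0)|^{4}\,\omega_c,\qquad \omega_c=\tfrac{1}{4\pi}\sqrt{4|\varphi'(0)|^4-|\varphi'''(0)/\varphi'(0)|^{2}},
\]
which is manifestly nonzero under the stability hypothesis. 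With $h(I)=\frac{a_1a_2}{4\pi a}\ln I_\xi+\omega I_B+8C_{4,2}I_\xi^2 I_B+\cdots$ the Hessian is nondegenerate and KAM applies. So the ``heart of the matter'' is not a property of higher derivatives of $\varphi$ but the choice of the right mixed monomial; your proposed $I_B^2$ route, as stated, does not close.

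A secondary difference: you invoke the two-degrees-of-freedom topological trapping (KAM $2$-tori disconnect each $3$-dimensional energy level) to upgrade the large-measure conclusion. The paper does not use this; it obtains the ``almost every'' directly from the full-measure set of Diophantine frequencies $\omega^*=\nabla h(I^*)$. Your trapping idea is in principle stronger, but you would need to handle the non-compactness/singularity of the energy levels near $\xi=0$ carefully before it can be used.
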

In the borderline of the stability threshold i.e. when $2\left|\varphi^{\prime}(0)\right|^{3}=\left|\varphi^{\prime\prime\prime}(0)\right|
$, the stability result is weaker as the confinement time becomes a the power law. More precisely,
\begin{thm}\label{main result 2}
Given a simply connected domain $\Omega\subsetneq\mathbb{R}^2$ and $0\in \Omega^\circ$ a stationary point.
Denote by $\varphi:\Omega\to\mathbb D$ a biholomorphic conformal map such that $\varphi(0)=0$ satisfying
$2\left|\varphi^{\prime}(0)\right|^{3} = \left|\varphi^{\prime\prime\prime}(0)\right|.$
Let $z_1$, $z_2$ be two point vortices with vorticity strengths $a_1+a_2\neq0$. Then, for any $\beta<1$, there exists $\varepsilon_0>0$ (small enough) such that for all $0<\varepsilon\le \varepsilon_0$, for any initial positions satisfying
\begin{equation*}
\left|z_{k}\left(0\right)\right|<\varepsilon,\quad k=1,2,
\end{equation*}
the solutions $z_{k}\left(t\right)$ of \eqref{eq:Kirchhoff-Routh_eq} satisfy $|z_k(t)| < \varepsilon^\beta$ for $t<C\varepsilon^{-\alpha}$ with $\alpha = \min(\frac{1-\beta}{2},\frac{1}{4})$ and $k=1,2$ i.e. $t_{\varepsilon,\beta}\ge C\varepsilon^{-\alpha} $.
\end{thm}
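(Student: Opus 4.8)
The plan is to reduce the two–vortex system \eqref{eq:Kirchhoff-Routh_eq} to a slow–fast Hamiltonian system in the relative variable $\xi:=z_1-z_2$ and the centre of vorticity $B:=(a_1z_1+a_2z_2)/(a_1+a_2)$, and then to confine two scalar observables that pin down the geometry. Writing the Kirchhoff--Routh Hamiltonian as $H=\tfrac{a_1a_2}{2\pi}\ln|\xi|+H_{\mathrm{reg}}(\xi,B)$ with $H_{\mathrm{reg}}=a_1a_2\gamma_\Omega(z_1,z_2)+\tfrac12a_1^2\tilde\gamma_\Omega(z_1)+\tfrac12a_2^2\tilde\gamma_\Omega(z_2)$ smooth near $(0,0)$, a direct computation gives $\dot\xi=\tfrac{a_1+a_2}{a_1a_2}\nabla^\perp_\xi H$ and $\dot B=\tfrac1{a_1+a_2}\nabla^\perp_B H_{\mathrm{reg}}$ (the singular term is independent of $B$). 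I would first Taylor–expand $H_{\mathrm{reg}}$ at the stationary configuration: using $\varphi(0)=0$, $\varphi''(0)=0$ and the explicit disc Green's and Robin functions, its restriction to $\xi=0$ is $\tfrac12(a_1+a_2)^2\tilde\gamma_\Omega(B)$, whose Hessian at $B=0$ has eigenvalues $\tfrac{1}{2\pi}\bigl(|\varphi'(0)|^2\pm\tfrac{|\varphi'''(0)|}{2|\varphi'(0)|}\bigr)$. The borderline hypothesis $2|\varphi'(0)|^3=|\varphi'''(0)|$ says precisely that the smaller eigenvalue vanishes, leaving one positive direction and one flat (degenerate) direction for $B$; this is the mechanism responsible for the loss of indefinite confinement.

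The second ingredient is the moment of inertia $M:=a_1|z_1|^2+a_2|z_2|^2$, which in the new variables reads $M=(a_1+a_2)|B|^2+\tfrac{a_1a_2}{a_1+a_2}|\xi|^2$. Since the singular term depends on $(z_1,z_2)$ only through $\xi$, it contributes nothing to $\dot M=2\sum_k z_k\cdot\nabla^\perp_{z_k}H$, so $\dot M$ is produced solely by $H_{\mathrm{reg}}$; as $(0,0)$ is a critical point of $H_{\mathrm{reg}}$ this yields the smallness $|\dot M|\le C|z|^2$. Thus $H$ is exactly conserved and $M$ is almost conserved, and the two together control the two scales: $H$ pins $|\xi|$ and the non–degenerate direction of $B$, while $M$ (here $a_1+a_2\neq0$ is essential) controls $|B|^2$ itself.

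I would then run a continuity argument on $[0,t_{\varepsilon,\beta})$, where by \eqref{def:exit_time} one has $|z_k(t)|<\varepsilon^\beta$. Conservation of $H$ gives $\tfrac{a_1a_2}{2\pi}(\ln|\xi(t)|-\ln|\xi(0)|)=H_{\mathrm{reg}}(\xi(0),B(0))-H_{\mathrm{reg}}(\xi(t),B(t))$; the right–hand side is $O(\varepsilon^{\min(1,2\beta)})=o(1)$, so the stiffness of the logarithm pins $|\xi(t)|\sim|\xi(0)|\lesssim\varepsilon$ — the relative variable never triggers the exit and $M=(a_1+a_2)|B|^2+O(\varepsilon^2)$. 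It remains to bound $M$. The crude bound $|\dot M|\le C|z|^2$ alone only yields a confinement time of order one, so the point is to refine it by using $H$ to control the non–degenerate direction of $B$: this bounds the dangerous part of $\dot M$ by the flat component of $B$ times a factor that $H$ keeps small, which upon integration gives linear–in–time control of $\sqrt M$. Balancing the resulting slow growth of $M$ against the target value $\varepsilon^{2\beta}$ produces $t_{\varepsilon,\beta}\gtrsim\varepsilon^{-\alpha}$, the two regimes of $\alpha=\min(\tfrac{1-\beta}{2},\tfrac14)$ corresponding to whether the quadratic energy or the relaxed radius $\varepsilon^\beta$ governs the balance, with crossover at $\beta=\tfrac12$.

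The main obstacle is exactly this last estimate along the degenerate direction. In the strictly stable case of Theorem~\ref{main result} both eigenvalues are positive, the exponentiated conserved quantity (the analogue of $e^{2H}$ in the toy model) is a genuine Morse function with a compact level set, and one gets $t_{\varepsilon,1}=+\infty$; at the borderline the degeneracy destroys coercivity along the flat direction, so conservation alone no longer closes the argument and one must track the slow secular drift of $M$ in time. Quantifying this drift sharply — showing that the domain's asymmetry cannot push the flat component of $B$ past $\varepsilon^\beta$ before time $\varepsilon^{-\alpha}$, while simultaneously keeping the bootstrap assumption $|z_k|<\varepsilon^\beta$ alive — is the technical heart of the proof and the precise place where a power law, rather than indefinite confinement, emerges.
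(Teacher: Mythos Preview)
Your setup matches the paper: introduce $(\xi,B)$, use conservation of $H$ (equivalently of $\tilde H=e^{\frac{4\pi}{a_1a_2}H}=|\xi|^2\exp(\,\cdot\,)$) to pin $|\xi(t)|\sim\varepsilon$ on $[0,t_{\varepsilon,\beta})$, and correctly identify the borderline hypothesis with the vanishing of one eigenvalue of the Hessian of $\tilde\gamma_\Omega$ at $0$. The divergence is in how you then control $B$.

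The paper does \emph{not} use the moment of inertia $M$. It writes the ODE $\dot B=\mathcal{R}B+O(|z|^2)$ from Lemma~\ref{lem:grad_gamma} and observes that in the critical case the linearisation $\mathcal{R}$ is \emph{nilpotent} (its eigenvalues $\pm\tfrac{a_1+a_2}{2}\sqrt{9|c_0|^2-4c_1^2}$ collapse to $0$), so $e^{\mathcal{R}t}=I+\mathcal{R}t$ grows only linearly. Duhamel then yields
\[
|B(t)|\le C(1+t)\varepsilon+C(t+t^2)\bigl(\sup|B|^2+\varepsilon^2\bigr),
\]
and a standard bootstrap lemma closes this to $|B|\lesssim T^2\varepsilon$ for $t\le T=\varepsilon^{-\alpha}$, with the two constraints $T^4\varepsilon\lesssim1$ and $T^2\varepsilon\lesssim\varepsilon^\beta$ giving $\alpha=\min(\tfrac{1-\beta}{2},\tfrac14)$.

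Your proposed mechanism --- use $H$ to control the non--degenerate component $B_+$ and then feed this into $\dot M$ --- has a genuine gap. On the bootstrap interval the regular part of the Hamiltonian satisfies only $H_{\mathrm{reg}}(t)-H_{\mathrm{reg}}(0)=O(\varepsilon^{2\beta})$ (this is exactly what you used to pin $|\xi|$), so all that conservation delivers is $\lambda_+B_+^2=O(\varepsilon^{2\beta})$, i.e.\ $|B_+|\lesssim\varepsilon^\beta$, which is nothing beyond the a priori assumption. With only this, the leading term in $\dot M$ is $2(a_1+a_2)B\cdot\mathcal{R}B$, which in the Jordan basis is proportional to $B_+B_-$ and hence of size $\varepsilon^\beta\sqrt{M}$; integrating gives $\sqrt{M}\lesssim\varepsilon+t\varepsilon^\beta$, so you exit at $t\sim1$ --- precisely the crude bound you wished to improve. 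What actually keeps $B_+$ small is the ODE structure itself: since $\mathcal{R}$ is nilpotent the non--flat component has \emph{no linear drive}, $\dot B_+=O(|z|^2)$, while the flat component is driven linearly by $B_+$. That is the content of $e^{\mathcal{R}t}=I+\mathcal{R}t$, and it is this nilpotency (not energy conservation) that produces the power law.
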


\begin{rem}
    \begin{itemize}
        \item Clearly, when the two point vortices have opposite sign, i.e. $a_1a_2<0$, the conservation of the Hamiltonian, guarantees that no collision of the two vertices can happen. However, if they have the same sign, collision may happen only at the boundary (see Theorem 3.3.2 in \cite{newton} as well as Chapter 15 of Flucher \cite{Flucher1999}). Such eventual time is {\it `a priori'} much bigger than the first exit time $t_{\varepsilon,1}$. Since in our case $t_{\varepsilon,1}=\infty$, then no collision also can happen in this case. 
        \item Theorem \ref{main result} gives the first result about the longtime confinement for two point vortices in a genernal simply connected domain. 

\item Our result proves the all time confinement, and the fact that $\beta=1$ aligns our result with the classical definition of stability in the sense of Lyapunov. 
\item The presence of the boundary of the domain breaks, in general\footnote{if the domain does not enjoy further symmetries like rotationally invariant}, the integrability of the two point vortices system (see ref \cite{newton}).
The example of the two-point vortex model given by Luithardt, Kadtke, and Pedrizzetti \cite{luithardt1994chaotic} demonstrated the chaos feature phenomenon without explicit disturbance and revealed the non-integrability of the system. On the other hand the N point vortices system in $\mathbb{R}^2$ has three independent conserved quantities, coming from the conservation of interaction energy, rotational symmetry and translation symmetry. This explains why the case of $N\ge 4$ is generally not integrable. In general, involving the boundary breaks the rotational and translation symmetry and only one conserved quantity remains which partially shows that only the case $N=1$ with a general boundary is integrable.
\item As it is stated in Thoerem \ref{main result}, not all small initial positions will lead to all time confinement. However the set of the exceptions is of Lebesgue measure zero. Indeed, accordinig to the proof, given the conformal map $\varphi$ and initial positions $z_i(0),i=1,2$ the algorithm to verify whether $z_i(0)$ meet the confinement condtion is the following:
\begin{itemize}
    \item Calculating $\xi(0)=\frac{\sqrt{a_1 a_2}}{a_1+a_2}(z_1(0)-z_2(0)),B(0)=\frac{a_1 z_1(0)+a_2 z_2(0)}{a_1+a_2}$ and the action variable $I_\xi(0) = \frac{|\xi(0)|^2}{2}, I_B(0)=\frac{|B(0)|^2}{2}$, where $a_1, a_2$ are positive. Other cases are similar.
    \item Extracting the part of the Hamiltonian \eqref{eqn:new_H} that depends only on the action variables $I_\xi, I_B$: $h(I_\xi, I_B) = \frac{1}{4\pi^2}\int_0^{2\pi}\int_0^{2\pi} H(I_\xi,I_B,\theta_\xi,\theta_B)d\theta_\xi d\theta_B$.
    \item Calculating the frequency vector $\omega^* = \nabla h(I_\xi(0), I_B(0))$.
    \item Verifying that whether $\omega^*$ belongs the following Cantor-like set:
    \[
    \mathcal{D} = \left\{\omega\in\mathbb{R}^2 \bigg| \text{for some $C, \nu >0$, } |\omega\cdot k| \ge \frac{C}{|k|^\nu},\ \forall k\in \mathbb{Z}^2\setminus {0}  \right\}.
    \]
    If $\omega^*\in \mathcal{D}$ which is exactly that $\omega^*$ satisfies the Diophantine condition, then the all time confinement holds. 
\end{itemize}
\item {In the degenerate case $a_1+a_2=0$, the longtime confinement property may be lost as shown by the following example. 
Consider the case $\Omega=\mathbb{D}$, $a_1=-a_2=1$. Furthermore, without loss of generality, one can assume $z_1 = \bar{z}_2 := x_1 + ix_2$. Indeed, for the general initial data $|z_1(0)| = |z_2(0)|$, there exists a rotation $\lambda$ such that $\lambda z_1(0) = \overline{\lambda z}_2(0)$. The system \eqref{eq:Kirchhoff-Routh_eq} then reduces to 
\begin{align*}
2\pi\dot{x}_1 & =\frac{-x_2}{1-x^{2}_1-x_2^{2}}-\frac{x_2\left(1+x^{2}_1+x_2^{2}\right)}{1-2x^{2}_1+2x_2^{2}+\left(x^{2}_1+x_2^{2}\right)^{2}}+\frac{1}{2x_2},\\
2\pi\dot{x}_2 & =\frac{x_1}{1-x^{2}_1-x_2^{2}}-\frac{x_1\left(1-x^{2}_1-x_2^{2}\right)}{1-2x^{2}_1+2x_2^{2}+\left(x^{2}_1+x_2^{2}\right)^{2}},
\end{align*}
where the trajectory is defined by 
\[
\frac{|1-z^2|}{(1-|z|^2)|z-\bar{z}|}=C.
\]
For $|z(0)|$ small, one can plot its trajectory (see Figure \ref{fig:degenerate_orbit}) and observe that it loses the confinement. In fact, by the equation of $\dot{x}_1$, for small $|z|$ the term $\frac{1}{2x_2}$ is big and the vortex will leave $\varepsilon^\beta$ disk very soon. }
\begin{figure}[ht]
\centering
\includegraphics[scale=0.5]{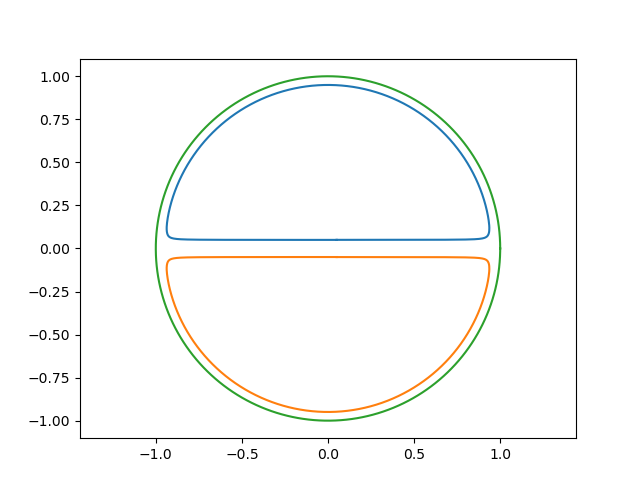}
\caption{The trajectory of $z_{1},z_{2}$ for $\Omega=\mathbb{D},a_{1}=-a_{2}=1$. \label{fig:degenerate_orbit}}
\end{figure}
    \end{itemize}
\end{rem}

The structure of this paper is outlined as follows: The subsequent
section introduces the necessary notation and lists some key lemmas.
Section 3 presents the proof of Theorem \ref{main result}
and examines the limitations of this method when applied to the cases
where $N\geq3$. Finally, the appendix provides the relevant content
on Cauchy's estimates.

\section{Preliminary}

We list some useful expansions of the regular part of the Green's function and some preparation work in this section for Hamiltonian reduction. 
Recall that for vector function $f=(f_1,f_2)$, $f^\perp=(-f_2,f_1)=if$ in the complex planar. Then the singular part of $\nabla_{x}^{\bot}G_{\Omega}\left(z_{k}\left(t\right),z_{j}\left(t\right)\right)$
is given by $\frac{i\left(z_{k}\left(t\right)-z_{j}\left(t\right)\right)}{\left|z_{k}\left(t\right)-z_{j}\left(t\right)\right|^{2}}$. And we can
rewrite ($\text{\ref{eq:Kirchhoff-Routh_eq}}$) as:
\begin{align}
\frac{d}{dt}z_{k}\left(t\right)= & a_{k}\frac{i}{2}\nabla\tilde{\gamma}_{\Omega}\left(z_{k}\left(t\right)\right)\nonumber \\
 & +\sum_{j\neq k}^{N}a_{j}\left(\frac{1}{2\pi}\frac{i\left(z_{k}\left(t\right)-z_{j}\left(t\right)\right)}{\left|z_{k}\left(t\right)-z_{j}\left(t\right)\right|^{2}}+i\nabla_{1}\gamma_\Omega\left(z_{k}\left(t\right),z_{j}\left(t\right)\right)\right).\label{eq:point_vortex}
\end{align}
We intend to push Taylor series expansion of the
regular part of Green's function to higher order to obtain a more intuitive
understanding of the point vortex equation $\left(\text{\ref{eq:point_vortex}}\right)$. 

Recall that Buttà and Donati \cite{donati2021long} proved that a
single point vortex placed at $x_{0}$ being stationary is equivalent
to $\varphi''\left(x_{0}\right)=0$ and they showed 
\begin{equation}
\nabla_{1}\gamma_\Omega\left(x,y\right)=\frac{\overline{\varphi'\left(x\right)}}{2\pi\overline{\left(\varphi\left(x\right)-\varphi\left(y\right)\right)}}-\frac{\overline{\varphi'\left(x\right)}}{2\pi\overline{\left(\varphi\left(x\right)-\varphi\left(y\right)^{*}\right)}}-\frac{1}{2\pi\overline{\left(x-y\right)}}\label{eq:expand_gamma}
\end{equation}
where $\varphi\left(y\right)^{*}=\frac{1}{\overline{\varphi\left(y\right)}}$
is the circle inversion and $\nabla_1$ is the gradient with respect to the first variable.

Without loss of generality, we can set the stationary point at $x_{0}=0$
and have the following expansions of $\gamma_\Omega$. This will help us to describe the lowest order terms in the Hamiltonian.
\begin{lem}
\label{lem:gamma}For any $|x|$, $|y|\leq\delta$
with $\delta$ small enough, it holds that

\begin{equation}
\label{lem:2.1}
\gamma_{\Omega}\left(x,y\right) =\frac{1}{2\pi}\ln\left|\varphi'\left(0\right)\right|+\frac{1}{2}\Re\left(c_{0}\left(x^{2}+xy+y^{2}\right)\right)+c_{1}\Re\left(x\bar{y}\right)+E_\Omega\left(x,y\right),
\end{equation}

where the remainder is controlled by 
\[
\left|E_{\gamma_\Omega}\left(x,y\right)\right|\leq C\delta^{3}
\]
 with $C$ being a constant that only depends on $\Omega$ and \[c_{0}=\frac{\varphi'''\left(0\right)}{6\pi\varphi'\left(0\right)},c_{1}=\frac{\left|\varphi'\left(0\right)\right|^{2}}{2\pi}.\]
\end{lem}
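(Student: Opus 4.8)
The plan is to begin from the conformal representation of the regular part recorded just above the statement,
\[
\gamma_\Omega(x,y) = \gamma_{\mathbb D}(\varphi(x),\varphi(y)) + \frac{1}{2\pi}\ln\frac{|\varphi(x)-\varphi(y)|}{|x-y|},
\]
and to Taylor expand the two summands separately around the stationary point $0$. The structural input that makes the computation clean is the stationarity condition $\varphi''(0)=0$, which together with $\varphi(0)=0$ forces
\[
\varphi(x) = \varphi'(0)\,x + \frac{\varphi'''(0)}{6}\,x^3 + O(|x|^4).
\]
Writing $a=\varphi'(0)$ and $b=\varphi'''(0)/6$, the entire task reduces to extracting the quadratic-in-$(x,y)$ coefficients while discarding everything of order $O(\delta^3)$.

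For the disc term I would first simplify the potential. Using $\varphi(y)^*=1/\overline{\varphi(y)}$ one verifies the elementary identity $|\varphi(x)-\varphi(y)^*|\,|\varphi(y)| = |1-\varphi(x)\overline{\varphi(y)}|$, so that $\gamma_{\mathbb D}(\varphi(x),\varphi(y)) = -\tfrac{1}{2\pi}\ln|1-\varphi(x)\overline{\varphi(y)}|$. Since $\varphi(x)\overline{\varphi(y)}=|a|^2 x\bar y+O(\delta^4)$, the expansion $\ln|1-w|=-\Re(w)+O(|w|^2)$ applied with $w=O(\delta^2)$ yields at once $\gamma_{\mathbb D}(\varphi(x),\varphi(y)) = \tfrac{|a|^2}{2\pi}\Re(x\bar y)+O(\delta^4)$, i.e. the term $c_1\Re(x\bar y)$ with $c_1=|\varphi'(0)|^2/(2\pi)$.

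For the logarithmic ratio the key algebraic step is to cancel the singularity before expanding. Because $x^3-y^3=(x-y)(x^2+xy+y^2)$, the difference factors as $\varphi(x)-\varphi(y)=(x-y)\big[a+b(x^2+xy+y^2)+O(\delta^3)\big]$, so the factor $x-y$ cancels the denominator and
\[
\frac{|\varphi(x)-\varphi(y)|}{|x-y|} = \big|\,a+b(x^2+xy+y^2)+O(\delta^3)\,\big|.
\]
Expanding via $\ln|a+v|=\ln|a|+\Re(v/a)+O(|v|^2)$ with $v=b(x^2+xy+y^2)+O(\delta^3)$ gives $\tfrac{1}{2\pi}\ln|a|+\tfrac{1}{2\pi}\Re\big(\tfrac{b}{a}(x^2+xy+y^2)\big)+O(\delta^3)$; since $b/a=\pi c_0$ with $c_0=\varphi'''(0)/(6\pi\varphi'(0))$, this is precisely the constant $\tfrac{1}{2\pi}\ln|\varphi'(0)|$ together with $\tfrac12\Re(c_0(x^2+xy+y^2))$. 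Summing the two contributions produces the claimed formula.

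I expect the main difficulty to be bookkeeping rather than conceptual. One must check that the neglected contributions are genuinely $O(\delta^3)$, uniformly over $|x|,|y|\le\delta$, with a constant depending only on $\Omega$ (equivalently, on $\varphi$). The worst error source is the first term: after dividing by $x-y$, the quartic Taylor term of $\varphi$ contributes $\tfrac{x^4-y^4}{x-y}=x^3+x^2y+xy^2+y^3=O(\delta^3)$, and one must verify that higher-order terms and the $O(|v|^2)=O(\delta^4)$ remainder of the logarithm are all absorbed at the $O(\delta^3)$ level. Since $\varphi$ is biholomorphic with $\varphi'(0)\neq0$, the bounds on $\varphi$, on its Taylor remainders, and on $1/|\varphi'|$ needed to make these estimates uniform are all available on a fixed small disc about $0$, which is exactly where the smallness of $\delta$ is used.
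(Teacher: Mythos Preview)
Your proposal is correct and follows essentially the same route as the paper: the same decomposition $\gamma_\Omega=\tfrac{1}{2\pi}\ln\tfrac{|\varphi(x)-\varphi(y)|}{|x-y|}-\tfrac{1}{2\pi}\ln|1-\varphi(x)\overline{\varphi(y)}|$, the same quadratic extraction from each piece, and the same $O(\delta^3)$ remainder. The only stylistic difference is that the paper treats $F(x,y)=\tfrac{1}{2\pi}\ln\tfrac{\varphi(x)-\varphi(y)}{x-y}$ as a holomorphic function of two variables and bounds the remainder via multivariable Cauchy estimates, whereas you obtain the same coefficients by the explicit factoring $x^3-y^3=(x-y)(x^2+xy+y^2)$ and bound the remainder by direct Taylor control of $\varphi$.
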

\begin{proof}
The proof is an implication of \eqref{greens for disc}
and Proposition $2.1$ in \cite{donati2021long}. We know that
\begin{align}
\gamma_\Omega\left(x,y\right)= & \gamma_{\mathbb D}\left(\varphi\left(x\right),\varphi\left(y\right)\right)+\frac{1}{2\pi}\ln\frac{\left|\varphi\left(x\right)-\varphi\left(y\right)\right|}{\left|x-y\right|}\nonumber \\
 & =\frac{1}{2\pi}\ln\frac{\left|\varphi\left(x\right)-\varphi\left(y\right)\right|}{\left|x-y\right|}-\frac{1}{2\pi}\ln\left|1-\varphi\left(x\right)\overline{\varphi\left(y\right)}\right|.\label{eq:=000020gamma}
\end{align}
We also know that 
$$
\frac{1}{2\pi}\ln\frac{\left|\varphi\left(x\right)-\varphi\left(y\right)\right|}{\left|x-y\right|} = \Re\left(\frac{1}{2\pi}\ln\frac{\varphi\left(x\right)-\varphi\left(y\right)}{x-y}\right).
$$
Then expanding the analytic part around $\left(0,0\right)$ up to the
second order yields 
\begin{equation}
\frac{1}{2\pi}\ln\frac{\varphi\left(x\right)-\varphi\left(y\right)}{x-y}=\frac{\ln\varphi'\left(0\right)}{2\pi}+\frac{\varphi'''\left(0\right)}{12\pi\varphi'\left(0\right)}xy+\frac{\varphi'''\left(0\right)}{12\pi\varphi'\left(0\right)}\left(x^2+y^2\right)+E\left(x,y\right),\label{eq:first_term_gamma}
\end{equation}
where $\varphi''(0)=0$ is applied since we assume $0$ is the stationary point so that the linar term vanishes and $E$ is the remainder. Let $F\left(x,y\right)=\frac{1}{2\pi}\ln\frac{\varphi\left(x\right)-\varphi\left(y\right)}{x-y}$. Noting that $\varphi'(0)\neq0$ since $\varphi$ is conformal and the fact that $\lim_{y\to x}\frac{\varphi\left(x\right)-\varphi\left(y\right)}{x-y}=\varphi'(x)$, it is clear to see that $F(x,y)$ is holomorphic when $x, y$ are close to $0$.
Moreover, noting that 
\[
E\left(x,y\right)=\frac{\pi}{2}\sum_{\alpha_{1}+\alpha_{2}\geq 3}\frac{\partial_{x}^{\alpha_{1}}\partial_{y}^{\alpha_{2}}F\left(0,0\right)}{\alpha_{1}!\alpha_{2}!}x^{\alpha_{1}}y^{\alpha_{2}},
\]
let $r=\min_{x\in\partial \Omega}\left|x\right|$ so that $D_r=\left\{z\in\mathbb{C}:\; \left|z\right|\le r\right\}\subset\Omega$. Define $M=\max_{x,y\in\partial D_r}\left|\ln\frac{\varphi\left(x\right)-\varphi\left(y\right)}{x-y}\right|$. $M$ is finite since $D_r$ is bounded and obviously $M, r, D_r$ are fixed once the domain $\Omega$ is fixed. Thanks to Cauchy's estimate $\text{\ref{prop:cauchy=000020estimate}}$
the derivatives of $F\left(x,y\right)$ are controlled by
\[
\left|E\left(x,y\right)\right|\leq \frac{\pi M}{2}\sum_{\alpha_{1}+\alpha_{2}\geq 3}\left|\frac{x}{r}\right|^{\alpha_{1}}\left|\frac{y}{r}\right|^{\alpha_{2}}\leq \frac{\pi M}{2}\sum_{k\geq 3}\left|\frac{\delta}{r}\right|^{k}= \frac{\pi M}{2}\frac{\delta^3}{r^2\left(r-\delta\right)}.
\]
By choosing $\delta < \frac{r}{2}$, $E$ is controlled by 
\[
\left|E\left(x,y\right)\right| \le C\delta^{3},
\]
where $C$ is a constant only depending on $\Omega$. Therefore letting $c_0=\frac{\varphi'''\left(0\right)}{6\pi\varphi'\left(0\right)}$, we have 
$$
\frac{1}{2\pi}\ln\frac{\left|\varphi\left(x\right)-\varphi\left(y\right)\right|}{\left|x-y\right|} = \frac{\ln|\varphi'(0)|}{2\pi} +\frac{c_0}{4}xy+\frac{\bar{c}_0}{4}\bar{x}\bar{y}+\frac{c_0}{4}\left(x^2+y^2\right)+\frac{\bar{c}_0}{4}\left(\bar{x}^2+\bar{y}^2\right)+ \Re\left(E(x,y)\right),
$$
where $|\Re\left(E(x,y)\right)| \le C\delta^3 .$
Similarly, by expanding $\varphi\left(x\right)$ up to zero order and applying
Proposition \ref{prop:cauchy=000020estimate} gives
\begin{equation}\label{eq:phi_estimate}
    \varphi\left(x\right)=\varphi'\left(0\right)x+R_{\varphi,3}\left(x\right),\left|R_{\varphi,3}\left(x\right)\right|\leq M_{\varphi}\frac{\delta^3}{r^2\left(r-\delta\right)}\le\frac{\delta^3}{r^2\left(r-\delta\right)}\leq C\delta^3
\end{equation}
with $M_{\varphi}=\max_{x\in\partial D_r}\left|\varphi\left(x\right)\right|\le 1$ and $r<\frac{\delta}{2}$.
Clearly, similar method can not be applied to $\ln\left|1-\varphi\left(x\right)\overline{\varphi\left(y\right)}\right|$
directly since it is not holomorphic. Instead we write 
\[
\frac{1}{2\pi}\ln\left|1-\varphi\left(x\right)\overline{\varphi\left(y\right)}\right|=-\frac{1}{4\pi}\sum_{n=1}^{\infty}\frac{\left(\varphi\left(x\right)\overline{\varphi\left(y\right)}+\overline{\varphi\left(x\right)}\varphi\left(y\right)-\left|\varphi\left(x\right)\right|^{2}\left|\varphi\left(y\right)\right|^{2}\right)^{n}}{n}.  
\]
Let $G\left(x,y\right)=\varphi\left(x\right)\overline{\varphi\left(y\right)}+\overline{\varphi\left(x\right)}\varphi\left(y\right)-\left|\varphi\left(x\right)\right|^{2}\left|\varphi\left(y\right)\right|^{2}$. Then we have 
\begin{align}
    \frac{1}{2\pi}\ln\left|1-\varphi\left(x\right)\overline{\varphi\left(y\right)}\right|& =-\frac{1}{4\pi}\left(\varphi\left(x\right)\overline{\varphi\left(y\right)}+\overline{\varphi\left(x\right)}\varphi\left(y\right)\right)+\left(\frac{1}{4\pi}\left|\varphi\left(x\right)\right|^{2}\left|\varphi\left(y\right)\right|^{2}-\frac{1}{4\pi}\sum_{n=2}^{\infty}\frac{G\left(x,y\right)^n}{n}\right)  \nonumber \\
    & = -\frac{|\varphi'\left(0\right)|^2}{4\pi}\left(x\bar{y}+\bar{x}y\right)+R_{4}\left(x,y\right), \label{eq:last_term_gamma}
\end{align}
where $R_{4}\left(x,y\right)$ is composed of higher-order terms in $\varphi\left(x\right)\overline{\varphi\left(y\right)}+\overline{\varphi\left(x\right)}\varphi\left(y\right)$,  $\left|\varphi\left(x\right)\right|^{2}\left|\varphi\left(y\right)\right|^{2}$ and $\sum_{n=2}^{\infty}\frac{G\left(x,y\right)^n}{n}$.

With \eqref{eq:phi_estimate} we have
\[
\left|G\left(x,y\right)\right|\leq2\left|R_{\varphi,0}\left(x\right)R_{\varphi,0}\left(y\right)\right|+\left|R_{\varphi,0}^{2}\left(x\right)R_{\varphi,0}^{2}\left(y\right)\right|\le 2C\delta^2+C\delta^4\le C\delta^2,
\]
and then
\begin{equation}
|R_{4}\left(x,y\right)|\le C\delta^4.    
\end{equation}

Denoting $R_{4}\left(x,y\right) + E\left(x,y\right)$
by $E_{\gamma_\Omega}\left(x,y\right)$ and collecting $\left(\text{\ref{eq:first_term_gamma}}\right)$
and $\left(\text{\ref{eq:last_term_gamma}}\right)$ proves this lemma.
\end{proof}
Next lemma shows the expansion of derivative of $\gamma_\Omega$ up to first order, leading to the expansion of the equations of motions.
\begin{lem}
\label{lem:grad_gamma}For any $\left|x\right|$, $\left|y\right|\leq\delta$,
with $\delta$ small enough, we know
\[
\nabla_{1}\gamma_\Omega\left(x,y\right)=\bar{c}_{0}\bar{x}+\frac{1}{2}\bar{c}_{0}\bar{y}+c_{1}y+E_{\nabla_1\gamma_\Omega}\left(x,y\right),
\]
\[
\nabla\tilde{\gamma}_{\Omega}\left(x\right)=3\bar{c}_{0}\bar{x}+2c_{1}x+E_{\nabla\tilde{\gamma}_\Omega}\left(x\right),
\]
where 
\[
c_{0}=\frac{\varphi'''(0)}{6\pi \varphi'(0)},\quad c_{1}=\frac{|\varphi'(0)|^{2}}{2\pi}.
\]
The remainders are controlled by 
\[
\left|E_{\nabla_1\gamma_\Omega}\left(x,y\right)\right|\le C\delta^2, \left|E_{\nabla\tilde{\gamma}_\Omega}\left(x\right)\right|\le C\delta^2.
\]
\end{lem}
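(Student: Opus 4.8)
The plan is to obtain both formulas by differentiating the expansion \eqref{lem:2.1} of Lemma~\ref{lem:gamma}, using the complex-gradient convention $\nabla_1=2\partial_{\bar x}$ (with $x=x_1+ix_2$) under which \eqref{eq:expand_gamma} was written. I would first record the elementary rules that for $g$ holomorphic in $x$ one has $\nabla_1\Re\left(g(x)\right)=\overline{g'(x)}$, while $\partial_{\bar x}(x\bar y)=0$ and $\partial_{\bar x}(\bar x y)=y$. Applying these term by term to the three explicit pieces of \eqref{lem:2.1}: the constant $\frac{1}{2\pi}\ln|\varphi'(0)|$ contributes nothing; the piece $\frac12\Re\left(c_0(x^2+xy+y^2)\right)$ is holomorphic in $x$ and yields $\frac12\overline{c_0(2x+y)}=\bar c_0\bar x+\frac12\bar c_0\bar y$; and $c_1\Re(x\bar y)$, with $c_1$ real, yields $c_1 y$. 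This reproduces exactly the three main terms claimed for $\nabla_1\gamma_\Omega$, with $E_{\nabla_1\gamma_\Omega}:=\nabla_1 E_{\gamma_\Omega}$ to be estimated.

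The Robin-function formula should then follow with no extra computation. Since $\gamma_\Omega$ is symmetric, the chain rule gives the identity $\nabla\tilde\gamma_\Omega(x)=2\,(\nabla_1\gamma_\Omega)(x,x)$; substituting $y=x$ into the first formula produces $2\left(\bar c_0\bar x+\frac12\bar c_0\bar x+c_1 x\right)=3\bar c_0\bar x+2c_1 x$, and $E_{\nabla\tilde\gamma_\Omega}(x):=2\,E_{\nabla_1\gamma_\Omega}(x,x)$, so its bound is inherited from the one for $\nabla_1\gamma_\Omega$.

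The main obstacle is precisely the remainder estimate, because the pointwise bound $|E_{\gamma_\Omega}|\le C\delta^3$ from Lemma~\ref{lem:gamma} does \emph{not} by itself control a gradient. To bound $\nabla_1 E_{\gamma_\Omega}$ I would reuse the splitting already built in the proof of Lemma~\ref{lem:gamma}, writing $E_{\gamma_\Omega}$ as the real part of the order-$\ge3$ holomorphic remainder $E$ of $F(x,y)=\frac{1}{2\pi}\ln\frac{\varphi(x)-\varphi(y)}{x-y}$ from \eqref{eq:first_term_gamma}, plus the non-holomorphic remainder $R_4$ from \eqref{eq:last_term_gamma}. For the holomorphic part one has $\nabla_1\Re(E)=\overline{\partial_x E}$, and since $\partial_x E$ is the Taylor tail of $\partial_x F$ of order $\ge2$, the same geometric-series Cauchy estimate (Proposition~\ref{prop:cauchy=000020estimate}) used in Lemma~\ref{lem:gamma} yields $|\partial_x E|\le C\delta^2$. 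This is the binding contribution and it fixes the exponent $2$ in the final bound.

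For the non-holomorphic part I would compute $\nabla_1 R_4=2\partial_{\bar x}R_4$ by the product rule. The crucial simplification I expect is that $\partial_{\bar x}$ annihilates every factor holomorphic in $x$ — in particular $\varphi(x)$ and its holomorphic remainder $R_{\varphi,3}(x)$ from \eqref{eq:phi_estimate} — so only factors containing $\overline{\varphi(x)}$ survive, and each survives accompanied by a small factor. Using \eqref{eq:phi_estimate} together with its differentiated Cauchy bound $|R_{\varphi,3}'(x)|\le C\delta^2$, and the already established $|G(x,y)|\le C\delta^2$, every surviving term in $\partial_{\bar x}R_4$ is of order $\delta^3$ or smaller, giving $|\nabla_1 R_4|\le C\delta^3$. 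Combining the two parts yields $|E_{\nabla_1\gamma_\Omega}|\le C\delta^2$, and the Robin-function bound then follows from $E_{\nabla\tilde\gamma_\Omega}(x)=2E_{\nabla_1\gamma_\Omega}(x,x)$.
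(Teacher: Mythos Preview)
Your argument is correct and complete. The organization differs from the paper's, however, so a brief comparison is worthwhile.

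The paper does not differentiate Lemma~\ref{lem:gamma}; instead it starts from the explicit gradient formula~\eqref{eq:expand_gamma} and expands each piece separately. For the holomorphic part it notes that $\frac{\varphi'(x)}{\varphi(x)-\varphi(y)}-\frac{1}{x-y}=\partial_xF(x,y)$, and computes its zeroth and first Taylor coefficients by hand via the limits \eqref{eq:0_order}--\eqref{eq:1st_order_y}, then bounds the tail by Cauchy's estimate~\eqref{eq:remainder_estimate}; for the non-holomorphic part it expands $\frac{\varphi'(x)}{\varphi(x)-\varphi(y)^*}=-|\varphi'(0)|^2\bar y+O(\delta^2)$ directly. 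Your route is more economical for the main terms: by differentiating the quadratic form already obtained in \eqref{lem:2.1} you read off the coefficients $\bar c_0\bar x+\tfrac12\bar c_0\bar y+c_1y$ without any of those limit computations. The price is that you must control $\nabla_1 E_{\gamma_\Omega}$, which the paper avoids by working at the level of the gradient from the outset. Your key observation that $\partial_x E$ coincides with the order-$\ge2$ Taylor tail of $\partial_x F$ is exactly the content of the paper's estimate~\eqref{eq:remainder_estimate}, so the two arguments merge at that point; and your treatment of $\partial_{\bar x}R_4$ via $\partial_{\bar x}\overline{\varphi(x)}=\overline{\varphi'(x)}$ is a clean substitute for the paper's direct expansion of the $\varphi(y)^*$-term. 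The Robin-function identity $\nabla\tilde\gamma_\Omega(x)=2\nabla_1\gamma_\Omega(x,x)$ is used identically in both.
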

\begin{proof}
Conjugating $\left(\text{\ref{eq:expand_gamma}}\right)$ we have
\begin{equation}
\overline{\nabla_{1}\gamma_{\Omega}\left(x,y\right)}=\frac{\varphi'\left(x\right)}{2\pi\left(\varphi\left(x\right)-\varphi\left(y\right)\right)}-\frac{1}{2\pi\left(x-y\right)}-\frac{\varphi'\left(x\right)}{2\pi\left(\varphi\left(x\right)-\varphi\left(y\right)^{*}\right)}.\label{eq:expand_grad_gamma}
\end{equation}
Due to the fact that $\left(\text{\ref{eq:expand_gamma}}\right)$
has no interior singularity, $\frac{\varphi'\left(x\right)}{2\pi\left(\varphi\left(x\right)-\varphi\left(y\right)\right)}-\frac{1}{2\pi\left(x-y\right)}$ is analytic near $\left(x,y\right)=\left(0,0\right)$. Similar to the proof of the previous lemma, let $F\left(x,y\right)=\ln\frac{\left|\varphi\left(x\right)-\varphi\left(y\right)\right|}{\left|x-y\right|}$ then it is easy to see that
\[
\partial_xF = \frac{\varphi'\left(x\right)}{\varphi\left(x\right)-\varphi\left(y\right)}-\frac{1}{x-y}.
\]
Expanding $\partial_x F$ around $\left(x,y\right)=\left(0,0\right)$ up to the first order 
yields
\begin{align}
\frac{\varphi'\left(x\right)}{\varphi\left(x\right)-\varphi\left(y\right)}-\frac{1}{x-y} & =\lim_{\left(x,y\right)\to\left(0,0\right)}\left(\frac{\varphi'\left(x\right)}{\varphi\left(x\right)-\varphi\left(y\right)}-\frac{1}{x-y}\right)\label{eq:T_expand}\\
 & \qquad+x\lim_{\left(x,y\right)\to\left(0,0\right)}\nabla_{1}\left(\frac{\varphi'\left(x\right)}{\varphi\left(x\right)-\varphi\left(y\right)}-\frac{1}{x-y}\right)\nonumber \\
 & \qquad+y\lim_{\left(x,y\right)\to\left(0,0\right)}\nabla_{2}\left(\frac{\varphi'\left(x\right)}{\varphi\left(x\right)-\varphi\left(y\right)}-\frac{1}{x-y}\right)\nonumber \\
 & \qquad+\sum_{\alpha_1+\alpha_2\ge 2}\frac{\partial_x^{\alpha_1+1}\partial_y^{\alpha_2} F\left(0,0\right)}{\alpha_1!\alpha_2!}x^{\alpha_1}y^{\alpha_2}.\nonumber 
\end{align}
The zero order term can be calculated as following 
\begin{align}
\lim_{\left(x,y\right)\to\left(0,0\right)}\left(\frac{\varphi'\left(x\right)}{\varphi\left(x\right)-\varphi\left(y\right)}-\frac{1}{x-y}\right) & =\lim_{x\to0}\lim_{y\to x}\left(\frac{\varphi'\left(x\right)}{\varphi\left(x\right)-\varphi\left(y\right)}-\frac{1}{x-y}\right)\nonumber \\
 & =\lim_{x\to0}\lim_{y\to x}\frac{\varphi'\left(x\right)-\frac{\varphi\left(x\right)-\varphi\left(y\right)}{x-y}}{\left(x-y\right)\frac{\varphi\left(x\right)-\varphi\left(y\right)}{x-y}}\nonumber \\
 & =\lim_{x\to0}\frac{\varphi''\left(x\right)}{2\varphi'\left(x\right)}=0,\label{eq:0_order}
\end{align}
where we used the fact $\varphi''\left(0\right)=0$. For the first order
terms, the calculation is similar:
\begin{align}
 & \lim_{\left(x,y\right)\to\left(0,0\right)}\nabla_{1}\left(\frac{\varphi'\left(x\right)}{\varphi\left(x\right)-\varphi\left(y\right)}-\frac{1}{x-y}\right)\nonumber \\
= & \lim_{x\to0}\lim_{y\to x}\frac{\left(\varphi\left(x\right)-\varphi\left(y\right)\right)\varphi''\left(x\right)-\varphi'\left(x\right)^{2}}{\left(\varphi\left(x\right)-\varphi\left(y\right)\right)^{2}}+\frac{1}{\left(x-y\right)^{2}}\nonumber \\
= & \lim_{x\to0}\lim_{y\to x}\frac{\frac{\left(\varphi\left(x\right)-\varphi\left(y\right)\right)\varphi''\left(x\right)-\varphi'\left(x\right)^{2}+\left(\varphi\left(x\right)-\varphi\left(y\right)\right)^{2}/\left(x-y\right)^{2}}{\left(x-y\right)^{2}}}{\left(\frac{\varphi\left(x\right)-\varphi\left(y\right)}{x-y}\right)^{2}}\nonumber \\
= & \lim_{x\to0}\frac{\varphi'\left(x\right)\varphi'''\left(x\right)/3-\varphi''\left(x\right)/4}{\varphi'\left(x\right)^{2}}=\frac{\varphi'''\left(0\right)}{3\varphi'\left(0\right)},\label{eq:1st_order_x}
\end{align}
and 
\begin{align}
 & \lim_{\left(x,y\right)\to\left(0,0\right)}\nabla_{2}\left(\frac{\varphi'\left(x\right)}{\varphi\left(x\right)-\varphi\left(y\right)}-\frac{1}{x-y}\right)\nonumber \\
= & \lim_{x\to0}\lim_{y\to x}\frac{\varphi'\left(x\right)\varphi'\left(y\right)}{\left(\varphi\left(x\right)-\varphi\left(y\right)\right)^{2}}-\frac{1}{\left(x-y\right)^{2}}\nonumber \\
= & \lim_{x\to0}\lim_{y\to x}\frac{\frac{\varphi'\left(x\right)\varphi'\left(y\right)-\left(\varphi\left(x\right)-\varphi\left(y\right)\right)^{2}/\left(x-y\right)^{2}}{\left(x-y\right)^{2}}}{\left(\frac{\varphi\left(x\right)-\varphi\left(y\right)}{x-y}\right)^{2}}\nonumber \\
= & \lim_{x\to0}\frac{\varphi'\left(x\right)\varphi'''\left(x\right)/6-\varphi''\left(x\right)/4}{\varphi'\left(x\right)^{2}}=\frac{\varphi'''\left(0\right)}{6\varphi'\left(0\right)}.\label{eq:1st_order_y}
\end{align}
The remainder term in \eqref{eq:T_expand} is controlled by Cauchy's estimate (see Proposition \ref{prop:cauchy=000020estimate}): 
\begin{equation}
    \label{eq:remainder_estimate}
    \sum_{\alpha_1+\alpha_2\ge 2}\frac{\partial_x^{\alpha_1+1}\partial_y^{\alpha_2} F\left(0,0\right)}{\alpha_1!\alpha_2!}x^{\alpha_1}y^{\alpha_2} \le \max_{x,y\in\partial D_r}\left|\partial_xF\right|\sum_{k\geq2}\left|\frac{\delta}{r}\right|^{k}\le C\delta^2,
\end{equation}
where $\delta\ll r:=\min_{x\in\partial\Omega}|x|$ is applied. Plugging \eqref{eq:0_order}, \eqref{eq:1st_order_x}, \eqref{eq:1st_order_y} and \eqref{eq:remainder_estimate} into \eqref{eq:T_expand} yields
\begin{equation}
\frac{\varphi'\left(x\right)}{\varphi\left(x\right)-\varphi\left(y\right)}-\frac{1}{x-y}=\frac{\varphi'''\left(0\right)}{3\varphi'\left(0\right)}x+\frac{\varphi'''\left(0\right)}{6\varphi'\left(0\right)}y+E_1\left(x,y\right), \left|E_1\left(x,y\right)\right|\le C\delta^2.\label{eq:9}
\end{equation}
For the last term we have
\begin{align}
\frac{\varphi'\left(x\right)}{\varphi\left(x\right)-\varphi\left(y\right)^{*}}= & \frac{\varphi'\left(x\right)\overline{\varphi\left(y\right)}}{\varphi\left(x\right)\overline{\varphi\left(y\right)}-1}\nonumber\\
 =& -\varphi'\left(x\right)\overline{\varphi\left(y\right)}\left(1+R_1\left(\varphi\left(x\right)\overline{\varphi\left(y\right)}\right)\right),\label{eq:dgamma_estimate_2}
\end{align}
where $R_1\left(z\right)$ satisfying $\frac{1}{1-z}=1+R_1\left(z\right)$ is controlled by $\left|R_1\left(z\right)\right|\le C\left|z\right|$. 
We only need to extract the first order terms in $x$ and $y$. Expanding
$\varphi'\left(x\right), \varphi\left(x\right)$ around $x=0$ and with the help of Proposition \ref{prop:cauchy=000020estimate} yield for $\left|x\right|<\delta$
\begin{equation}
    \label{eq:phi_prime_expand}
    \varphi'\left(x\right) = \varphi'\left(0\right) + E_{\varphi'}\left(x\right), |E_{\varphi'}\left(x\right)|\le C\delta,
\end{equation}
and 
\begin{equation}
    \label{eq:phi_expand}
    \varphi\left(x\right) = \varphi'\left(0\right) + E_\varphi\left(x\right), |E_\varphi\left(x\right)|\le C\delta^2.
\end{equation}
Plugging \eqref{eq:phi_prime_expand} and \eqref{eq:phi_expand} into \eqref{eq:dgamma_estimate_2} gives
\begin{equation}
\frac{\varphi'\left(x\right)}{\varphi\left(x\right)-\varphi\left(y\right)^{*}}=-\left|\varphi'\left(0\right)\right|^{2}\bar{y}+E_2\left(x,y\right), \left|E_2\left(x,y\right)\right|\le C\delta^2. \label{eq:10}
\end{equation}
Collecting $\left(\text{\ref{eq:9}}\right)$ and $\left(\text{\ref{eq:10}}\right)$
proves the first expansion.

To prove the second one, the symmetric property of $\gamma\left(x,y\right)$
gives
\[
\nabla_{1}\gamma_{\Omega}\left(x,y\right)=\nabla_{2}\gamma_{\Omega}\left(y,x\right).
\]
So by the definition of Robin's function
\[
\nabla\tilde{\gamma}_{\Omega}\left(x\right)=\nabla_{1}\gamma_{\Omega}\left(x,x\right)+\nabla_{2}\gamma_{\Omega}\left(x,x\right)=2\nabla_{1}\gamma_{\Omega}\left(x,x\right),
\]
the proof is finished. 
\end{proof}

\section{Proof of the main Theorems}
Before stating the proof, we give a clear definition of the center
of a vortex in the case of $n=2$
\begin{equation}
B\left(t\right):=\frac{\sum_{k=1}^{2}a_{k}z_{k}\left(t\right)}{\sum_{k=1}^{2}a_{k}}=\frac{a_{1}z_{1}\left(t\right)+a_{2}z_{2}\left(t\right)}{a_{1}+a_{2}}.\label{eq:center_vorticity}
\end{equation}
The full system of $N=2$ reads
\begin{equation}
\begin{array}{c}
\frac{d}{dt}z_{1}=\frac{i}{2}a_{1}\nabla\tilde{\gamma}_{\Omega}\left(z_{1}\right)+ia_{2}\frac{z_{1}-z_{2}}{2\pi|z_{1}-z_{2}|^{2}}+ia_{2}\nabla_{1}\gamma_\Omega\left(z_{1},z_{2}\right),\\
\frac{d}{dt}z_{2}=\frac{i}{2}a_{2}\nabla\tilde{\gamma}_{\Omega}\left(z_{2}\right)+ia_{1}\frac{z_{2}-z_{1}}{2\pi|z_{2}-z_{1}|^{2}}+ia_{1}\nabla_{1}\gamma_\Omega\left(z_{2},z_{1}\right).
\end{array}\label{eq:full_sys}
\end{equation}
The total energy $H$ is conserved in time:
\begin{equation}
H=\frac{1}{2}\sum_{i=1}^{2}a_{i}^{2}\gamma_\Omega\left(z_{i},z_{i}\right)+a_{1}a_{2}\left(\gamma_\Omega\left(z_{1},z_{2}\right)+\frac{1}{4\pi}\ln\left(\left|z_{1}-z_{2}\right|^{2}\right)\right).\label{eq:hamiltonian}
\end{equation}
We will work on the system of center of vortexes $B$ and distance between two points $\xi:=\frac{\sqrt{a_1a_2}}{a_1+a_2}(z_1-z_2)$. All the proofs are done for positive $a_1, a_2$. Other cases are actually similar. It is straightforward to obtain the governing equation for $B,\xi$:
\begin{equation}
\begin{cases}
\frac{d}{dt}B=\frac{i}{a_{1}+a_{2}}\left(\frac{a_{1}^{2}}{2}\nabla\tilde{\gamma}_{\Omega}\left(z_{1}\right)+\frac{a_{2}^{2}}{2}\nabla\tilde{\gamma}_{\Omega}\left(z_{2}\right)+a_{1}a_{2}\left(\nabla_{1}\gamma_{\Omega}\left(z_{1},z_{2}\right)+\nabla_{1}\gamma_{\Omega}\left(z_{2},z_{1}\right)\right)\right).\\
\frac{d}{dt}\xi=\frac{ia_{1}a_{2}}{2\pi\left(a_{1}+a_{2}\right)}\frac{\xi}{\left|\xi\right|^{2}}+\frac{i\sqrt{a_{1}a_{2}}}{a_{1}+a_{2}}\left(\frac{a_{1}}{2}\nabla\tilde{\gamma}_{\Omega}\left(z_{1}\right)-\frac{a_{2}}{2}\nabla\tilde{\gamma}_{\Omega}\left(z_{2}\right)+a_{2}\nabla_{1}\gamma_{\Omega}\left(z_{1},z_{2}\right)-a_{1}\nabla_{1}\gamma_{\Omega}\left(z_{2},z_{1}\right)\right).
\end{cases}
\label{eq:B_xi_sys}
\end{equation}
where $z_1,z_2$ are considered as linear combinations of $B,\xi$:
\[
z_{1}=B+\frac{a_{2}}{\sqrt{a_{1}a_{2}}}\xi,\quad z_{2}=B-\frac{a_{1}}{\sqrt{a_{1}a_{2}}}\xi.
\]
Moreover, the system \eqref{eq:B_xi_sys} is actually a Hamiltonian system with 
\begin{equation}\label{eqn:new_H}
H=\frac{a_{1}a_{2}}{2\pi\left(a_{1}+a_{2}\right)}\ln\left|\xi\right|+\frac{1}{a_{1}+a_{2}}\left(\frac{a_{1}^{2}}{2}\tilde{\gamma}_{\Omega}\left(z_{1}\right)+\frac{a_{2}^{2}}{2}\tilde{\gamma}_{\Omega}\left(z_{2}\right)+a_{1}a_{2}\gamma_{\Omega}\left(z_{1},z_{2}\right)\right).
\end{equation}

\subsection{Proof of Theorem \ref{main result}}
Let $a=a_1+a_2$. Applying Lemma \ref{lem:grad_gamma}, one can explicitly write down the linear part of $\dot{B}$ in vector form
\[
\frac{d}{dt}B=a\left(\begin{array}{cc}
\frac{3}{2}c_{0}^{i} & \frac{3}{2}c_{0}^{r}-c_1\\
\frac{3}{2}c_{0}^{r}+c_1 & -\frac{3}{2}c_{0}^{i}
\end{array}\right)B,
\]
where $c_0^r,c_0^i$ are the real and imaginary parts of $c_0$. The eigenvalues of its linear operator is $\pm a\sqrt{c_{1}^{2}-\frac{9}{4}\left|c_{0}\right|^{2}}$. So under the stability condition, the eigenvalues of the linear operator are purely imaginary. Thus the linear operator is a rotation with characteristic frequency $\omega = a\sqrt{c_{1}^{2}-\frac{9}{4}\left|c_{0}\right|^{2}}:=a\omega_c$. Then there exists a canonical linear
transform $Lb=B$ such that at the linear level, $\frac{d}{dt}B$ becomes
\[
\dot{b}=-i\omega b.
\]
And the Hamiltonian can be rewrite as 
\begin{align}
H & =\frac{a_{1}a_{2}}{2\pi a}\ln\left|\xi\right|+\frac{1}{a}\left(\frac{a_{1}^{2}}{2}\tilde{\gamma}_{\Omega}\left(z_{1}\right)+\frac{a_{2}^{2}}{2}\tilde{\gamma}_{\Omega}\left(z_{2}\right)+a_{1}a_{2}\gamma_{\Omega}\left(z_{1},z_{2}\right)\right) \nonumber \\
 & =C+\frac{a_{1}a_{2}}{4\pi a}\ln\frac{\left|\xi\right|^{2}}{2}+\frac{\omega}{2}\left|b\right|^{2}+\frac{a_{1}+a_{2}}{2}\left(c_{0}^{r}\left(\xi_{1}^{2}-\xi_{2}^{2}\right)-2c_{0}^{i}\xi_{1}\xi_{2}\right)+h.o.t,
 \label{eq:H_expand_2nd}
\end{align}
where we applying Lemma \ref{lem:gamma} to expand $\gamma_\Omega$ and $\tilde{\gamma}_\Omega$. The naive choice of the unperturbed Hamiltonian is 
\[
\frac{a_{1}a_{2}}{4\pi a}\ln\frac{\left|\xi\right|^{2}}{2}+\frac{\omega}{2}\left|b\right|^{2}.
\]
However by defining the action angle variable $I=(I_\xi, I_b), \theta=(\theta_\xi,\theta_b)$ satisfying $\xi=\sqrt{2I_\xi}_\xi e^{i\theta_\xi}, b=\sqrt{2I_b} e^{i\theta_b}$, it becomes
\[
\frac{a_{1}a_{2}}{4\pi a}\ln I_\xi+\omega I_b,
\]
whose Hessian matrix is degenerate at any $(I_\xi, I_b)$. Thus we need to add higher order terms into the naive unperturbed Hamiltonian. Because $\varphi(x)$ is analytic for $x\in\Omega$, then for $x\in\Omega$
\[
\varphi(x) = \varphi'(0)x + \sum_{k=3}^\infty \frac{\varphi^{(k)}(0)}{k!}x^k,
\]
where we apply the fact $\varphi(0)=0,\varphi''(0)=0$. Let 
\[
\bar{\varphi}(x):=\overline{\varphi'(0)}x + \sum_{k=3}^\infty \frac{\overline{\varphi^{(k)}}(0)}{k!}x^k.
\]
Then $\bar{\varphi}(\bar{x})=\overline{\varphi(x)}$ is also an analytic function of $\bar{x}$ for $\bar{x}\in\widetilde{\Omega}$ where $\widetilde{\Omega}=\{\bar{x}|x\in\Omega\}$. Considering $\gamma_\Omega$ as a function of $(z_1,z_2,\bar{z}_1, \bar{z}_2)$:
\begin{align*}
    \gamma_{\Omega}	& =\frac{1}{4\pi}\ln\left(\frac{\varphi\left(z_{1}\right)-\varphi\left(z_{2}\right)}{z_{1}-z_{2}}\right)+\frac{1}{4\pi}\ln\left(\frac{\bar{\varphi}\left(\bar{z}_{1}\right)-\bar{\varphi}\left(\bar{z}_{2}\right)}{\bar{z}_{1}-\bar{z}_{2}}\right) \\
	& \qquad-\frac{1}{4\pi}\ln\left(1-\varphi\left(z_{1}\right)\bar{\varphi}\left(\bar{z}_{2}\right)\right)-\frac{1}{4\pi}\ln\left(1-\bar{\varphi}\left(\bar{z}_{1}\right)\varphi\left(z_{2}\right)\right),
\end{align*}
then $\gamma_\Omega$ is analytic in $\Omega\times\Omega\times\widetilde{\Omega}\times\widetilde{\Omega}$. 
Changing variables as follow
\[
z_1, z_2 \to \xi, B \to \xi, b \to r_b, \theta_b, r_\xi, \theta_\xi, (\xi=r_\xi e^{i\theta_\xi},b=r_be^{i\theta_b})
\]
then expanding $\gamma_\Omega$, one can extract the non-vanishing higher order terms that do not involve angle variables such as $r_\xi^{m}r_b^{n}$. One observation is that $m=2k, n=2l$ must be even since odd order terms must live with angle variables. Then these higher order terms are actually powers of action variables with integer orders, which is exactly what we want. If we successfully find at least one of such terms, then the original Hamiltonian has the form 
\[
H=\frac{a_{1}a_{2}}{4\pi a}\ln I_\xi+\omega I_b + c I_\xi^k I_b^l + f(I) + R(\theta, I),
\]
with $|R|\le C|I|^2$ where we obtain the estimate of $R$ by applying Lemma \ref{lem:gamma}. Furthermore, let 
\[
h = \frac{a_{1}a_{2}}{4\pi a}\ln I_\xi+\omega I_b + c I_\xi^k I_b^l + f(I)
\]
be the unperturbed Hamiltonian and it will satisfy $|\det \frac{\partial^2 h}{\partial I^2}|\neq 0$ for some $I$ hence the KAM theorem \ref{thm:KAM} can apply. To do so, we can consider to expand $\widetilde{H}:=\left(H-\frac{a_{1}a_{2}}{4\pi a} \ln \frac{|\xi|^2}{2}\right)$. Recall that 
\[
z_{1}=B+\frac{a_{2}}{\sqrt{a_{1}a_{2}}}\xi,\quad z_{2}=B-\frac{a_{1}}{\sqrt{a_{1}a_{2}}}\xi,\quad B=Lb.
\]
We have 
\[
\frac{2}{a}\widetilde{H} = a_1^2\tilde{\gamma}_\Omega\left(Lb+\frac{a_{2}}{\sqrt{a_{1}a_{2}}}\xi\right)+a_2^2\tilde{\gamma}_\Omega\left(Lb-\frac{a_{1}}{\sqrt{a_{1}a_{2}}}\xi\right)+2a_1a_2\gamma\left(Lb+\frac{a_{2}}{\sqrt{a_{1}a_{2}}}\xi,Lb-\frac{a_{1}}{\sqrt{a_{1}a_{2}}}\xi\right), 
\]
where the linear operator $L$ is 
\[
L=\left(\begin{array}{cc}
\sqrt{\frac{\omega_c}{\frac{3}{2}c_{0}^{r}+c_{1}}} & -\frac{\frac{3}{2}c_{0}^{i}}{\sqrt{\omega_c (\frac{3}{2}c_{0}^{r}+c_{1}})}\\
0 & -\sqrt{\frac{\frac{3}{2}c_{0}^{r}+c_{1}}{\omega_c}}
\end{array}\right).
\]
Expressing in complex form gives 
\[
Lb = l_1 b + \bar{l}_2 \bar{b},
\]
where 
\begin{align*}
l_{1} & =-\frac{1}{2}\sqrt{\frac{\frac{3}{2}c_{0}^{r}+c_{1}}{\omega_c}}+\frac{1}{2}\left(\sqrt{\frac{\omega_c}{\frac{3}{2}c_{0}^{r}+c_{1}}}+i\frac{\frac{3}{2}c_{0}^{i}}{\sqrt{\omega_c \left(\frac{3}{2}c_{0}^{r}+c_{1}\right)}}\right),\\
l_{2} & =\frac{1}{2}\sqrt{\frac{\frac{3}{2}c_{0}^{r}+c_{1}}{\omega_c}}+\frac{1}{2}\left(\sqrt{\frac{\omega_c}{\frac{3}{2}c_{0}^{r}+c_{1}}}+i\frac{\frac{3}{2}c_{0}^{i}}{\sqrt{\omega_c \left(\frac{3}{2}c_{0}^{r}+c_{1}\right)}}\right).
\end{align*}
Theoretically we have all the ingredients to expand $\widetilde{H}$ to any fixed order in terms of $r_b, r_\xi, \theta_b, \theta_\xi$. One needs to carefully calculate the coefficient of $r_\xi^{2k}r_b^{2l}$ for some fixed $k,l$ and prove it is non-zero. Manually computing these coefficients explicitly is extremely complicated. Fortunately with the help of Mathematica, we can leave complex and repetitive tasks to computers and obtain the precise expression of the coefficients. The detail algorithm is listed as the following:
\begin{itemize}
\item Explicitly define the functions 
\begin{align*}
\gamma_{\Omega}\left(z_{1},z_{2},\bar{z}_{1},\bar{z}_{2}\right) & =\frac{1}{4\pi}\ln\left(\frac{\varphi\left(z_{1}\right)-\varphi\left(z_{2}\right)}{z_{1}-z_{2}}\right)+\frac{1}{4\pi}\ln\left(\frac{\bar{\varphi}\left(\bar{z}_{1}\right)-\bar{\varphi}\left(\bar{z}_{2}\right)}{\bar{z}_{1}-\bar{z}_{2}}\right)\\
 & \qquad-\frac{1}{4\pi}\ln\left(1-\varphi\left(z_{1}\right)\bar{\varphi}\left(\bar{z}_{2}\right)\right)-\frac{1}{4\pi}\ln\left(1-\bar{\varphi}\left(\bar{z}_{1}\right)\varphi\left(z_{2}\right)\right),\\
\tilde{\gamma}_{\Omega}\left(z,\bar{z}\right) & =\frac{1}{4\pi}\ln\left(\varphi'\left(z\right)\right)+\frac{1}{4\pi}\ln\left(\bar{\varphi}'\left(z\right)\right)-\frac{1}{2\pi}\ln\left(1-\varphi\left(z\right)\bar{\varphi}\left(\bar{z}\right)\right),\\
\tilde{H}\left(z_{1},z_{2},\bar{z}_{1},\bar{z}_{2}\right) & =\frac{1}{a}\left(\frac{a_{1}^{2}}{2}\tilde{\gamma}_{\Omega}\left(z_{1},\bar{z}_{1}\right)+\frac{a_{2}^{2}}{2}\tilde{\gamma}_{\Omega}\left(z_{2},\bar{z}_{2}\right)+a_{1}a_{2}\gamma_{\Omega}\left(z_{1},z_{2},\bar{z}_{1},\bar{z}_{2}\right)\right),
\end{align*}
where $\varphi,\bar{\varphi}$ are considered as the abstract function
object in Mathematica.
\item Evaluate $\tilde{H}$ at 
\begin{align*}
z_{1} & =l_{1}b+\bar{l}_{2}\bar{b}+\sqrt{\frac{a_{2}}{a_{1}}}\xi,\quad z_{2}=l_{1}b+\bar{l}_{2}\bar{b}-\sqrt{\frac{a_{1}}{a_{2}}}\xi,\\
\bar{z}_{1} & =\bar{l}_{1}\bar{b}+l_{2}b+\sqrt{\frac{a_{2}}{a_{1}}}\bar{\xi},\quad\bar{z}_{2}=\bar{l}_{1}\bar{b}+l_{2}b-\sqrt{\frac{a_{1}}{a_{2}}}\bar{\xi}.
\end{align*}
\item Replace $b$ by $r_{b}e^{i\theta_{b}}$, $\xi$ by $r_{\xi}e^{i\theta_{\xi}}$
and obtain the expression of $\tilde{H}$ in terms of $r_{b},r_{\xi},\theta_{b},\theta_{\xi}$.
More precisely at this stage, $\tilde{H}$ is a function of $r_{b},r_{\xi},e^{i\theta_{b}},e^{i\theta_{\xi}}$.
\item Input the desired even orders $\left(m,n\right)$ for the target term
$r_{\xi}^{m}r_{b}^{n}$. Expand $\tilde{H}$ in terms of $r_{b}$
up to $n$-th order and extract the $n$-th order coefficient:
\[
C_{n}=\text{SeriesCoefficient}\left[\text{Series}\left[\tilde{H},\left\{ r_{b},0,n\right\} \right],n\right].
\]
\item Continue expanding $C_{n}$ in terms of $r_{\xi}$ up to $m$-th order
and extract the $m$-th order coefficient:
\begin{align*}
C_{m,n} & =\text{SeriesCoefficient}\left[\text{Series}\left[C_{n},\left\{ r_{\xi},0,m\right\} \right],m\right]\\
 & \qquad/.\left\{ \varphi\left(0\right)\to0,\bar{\varphi}\left(0\right)\to0,\varphi''\left(0\right)\to0,\bar{\varphi}''\left(0\right)\to0\right\} .
\end{align*}
We remark here $C_{m,n}$ consists of some constant and power series
of $e^{i\theta_{b}},e^{i\theta_{\xi}}$. The last part in brackets
means plugging these special values. 
\item Integrate $C_{m,n}$ in angle variable to eliminate the angle part
and finally obtain the coefficient of $r_{\xi}^{m}r_{b}^{n}$:
\[
C_{m,n}=\text{Integrate}\left[\text{Simplify}\left[C_{m,n}\right],\left\{ \theta_{b},0,2\pi\right\} ,\left\{ \theta_{\xi},0,2\pi\right\} \right]/4/\pi^{2}.
\]
\end{itemize}
According to \eqref{eq:H_expand_2nd}, we can easily imply that the coefficient of $r_b^2$ is $\frac{\omega}{2}$ which can be used to verify the algorithm above for $m=0, n=2$. On the other hand, we find both $\omega$ and $C_{4,2}$ the coefficient of $r_\xi^4r_b^2$ depending only on $\varphi'(0), \varphi'''(0)$. Recall that 
\[
\omega_c = \frac{\omega}{a} = \sqrt{c_1^2-\frac{9}{4}|c_0|^2}=\frac{\sqrt{4 |\varphi'(0)|^4 -\left|\frac{\varphi'''(0)}{\varphi'(0)}\right|^2}}{4 \pi }.
\]
We obtain $C_{4,2}$ via Mathematica and replacing $\sqrt{4 |\varphi'(0)|^4 -\left|\frac{\varphi'''(0)}{\varphi'(0)}\right|^2}$ by $4\pi\omega_c$ yields a readable clean expression
\begin{align*}
C_{4,2} & =\frac{a\left|\varphi'\left(0\right)\right|^{2}}{32\pi^{2}\omega_{c}}\left(12\left|\varphi'\left(0\right)\right|^{6}+\left|\varphi'''\left(0\right)\right|^{2}\right.\\
 & \quad\left.+2i\Im\left(\frac{\varphi'''\left(0\right)}{\varphi'\left(0\right)}\right)\left(\bar{\varphi}'\left(0\right)\varphi'''\left(0\right)-\varphi'\left(0\right)\bar{\varphi}'''\left(0\right)\right)-2\Re\left(\frac{\varphi'''\left(0\right)}{\varphi'\left(0\right)}\right)\left(\bar{\varphi}'\left(0\right)\varphi'''\left(0\right)+\varphi'\left(0\right)\bar{\varphi}'''\left(0\right)\right)\right)
\end{align*}
The expression can be further simplified by noticing 
\begin{align*}
 & 2i\Im\left(\frac{\varphi'''\left(0\right)}{\varphi'\left(0\right)}\right)\left(\bar{\varphi}'\left(0\right)\varphi'''\left(0\right)-\varphi'\left(0\right)\bar{\varphi}'''\left(0\right)\right)-2\Re\left(\frac{\varphi'''\left(0\right)}{\varphi'\left(0\right)}\right)\left(\bar{\varphi}'\left(0\right)\varphi'''\left(0\right)+\varphi'\left(0\right)\bar{\varphi}'''\left(0\right)\right)\\
= & \left(\frac{\varphi'''\left(0\right)}{\varphi'\left(0\right)}-\frac{\bar{\varphi}'''\left(0\right)}{\bar{\varphi}'\left(0\right)}\right)\left(\bar{\varphi}'\left(0\right)\varphi'''\left(0\right)-\varphi'\left(0\right)\bar{\varphi}'''\left(0\right)\right)-\left(\frac{\varphi'''\left(0\right)}{\varphi'\left(0\right)}+\frac{\bar{\varphi}'''\left(0\right)}{\bar{\varphi}'\left(0\right)}\right)\left(\bar{\varphi}'\left(0\right)\varphi'''\left(0\right)+\varphi'\left(0\right)\bar{\varphi}'''\left(0\right)\right)\\
= & -4\left|\varphi'''\left(0\right)\right|^{2}.
\end{align*}
Finally, we arrive at 
\begin{align*}
C_{4,2} & =\frac{a\left|\varphi'\left(0\right)\right|^{2}}{32\pi^{2}\omega_{c}}\left(12\left|\varphi'\left(0\right)\right|^{6}-3\left|\varphi'''\left(0\right)\right|^{2}\right)\\
 & =\frac{3a\left|\varphi'\left(0\right)\right|^{4}}{32\pi^{2}\omega_{c}}\left(4\left|\varphi'\left(0\right)\right|^{4}-\left|\frac{\varphi'''\left(0\right)}{\varphi'\left(0\right)}\right|^{2}\right)\\
 & =\frac{3}{2}a\left|\varphi'\left(0\right)\right|^{4}\omega_{c}
\end{align*}
which is never zero. Then the Hamiltonian can be written as 
\[
H=\underbrace{\frac{a_{1}a_{2}}{4\pi a}\ln I_\xi+\omega I_b + 8C_{4,2} I_\xi^2 I_b + f(I)}_{h(I)} + R(\theta, I),
\]
with $|R|\le C|I|^2$ and $f(I)$ is a power series of $I$. One can verify that for $I^*=(I_\xi^*, I_b^*), I_\xi^*\neq0$, it holds that $|\det \frac{\partial^2 h}{\partial I^2}(I^*)|\neq 0$. Because the set of frequencies that satisfy the Diophantine condition is dense, we can choose $I^*$ small and meanwhile in a dense set, and it still meets the condition of KAM theorem. More precisely, there exists a small enough $\epsilon_0$ such that for all $0<\epsilon\le\epsilon_0$, with the possible exception of a set of Lebesgue measure zero, for almost all $|I(0)|=|I^*|\le \epsilon$, it holds that $|I(t)|\le \mu \epsilon$ for all $t>0$. The choice of $I^*$ implies the condition of $z_1(0), z_2(0)$ and proves this theorem.

\subsection{Proof of Theorem \ref{main result 2}}
The proof for the critical case is totally different but much easier. We will need the following bootstrap Lemma. See for example \cite{bahouri1999high} 
or \cite{ibrahim2003solutions}.
\begin{lem}[bootstrap]
Let $0<T<\infty$, $c_1,c_2>0$, $\nu>1$ and 
$$\Theta:[0,T)\to[0,\infty)$$
be a continuous function satisfying
$$
\Theta(t)\leq c_1+c_2\Theta(t)^\nu$$
with
$$
c_1<(1-\frac1\alpha)\frac1{(\nu c_2)^\frac{1}{\nu-1}},\quad\mbox{and}\quad \Theta(0)<\frac1{(\nu c_2)^\frac1{\nu-1}}$$
then, for any $t\in[0,T)$, we have 
$$
\Theta(t)\leq\frac\nu{\nu-1}c_1.
$$
\end{lem}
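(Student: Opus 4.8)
The plan is to recast the hypothesis as a sign condition on the scalar function $f(x):=c_1+c_2x^\nu-x$ and then run a continuity (connectedness) argument. The assumed inequality $\Theta(t)\le c_1+c_2\Theta(t)^\nu$ is exactly $f(\Theta(t))\ge 0$ for every $t\in[0,T)$, so the whole proof reduces to understanding the sign of $f$ on $[0,\infty)$ together with where $\Theta$ is allowed to live.

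First I would analyze $f$. Since $\nu>1$ and $c_1,c_2>0$, we have $f(0)=c_1>0$, $f(x)\to+\infty$ as $x\to\infty$, and $f'(x)=c_2\nu x^{\nu-1}-1$ vanishes only at $x_*=(\nu c_2)^{-1/(\nu-1)}$, which is therefore the unique (global) minimum. Using $c_2x_*^{\nu}=x_*/\nu$, a direct computation gives $f(x_*)=c_1-(1-\tfrac1\nu)x_*$, so the first hypothesis is precisely the condition that forces $f(x_*)<0$. Consequently $f$ has exactly two positive roots $x_1<x_*<x_2$, with $f>0$ on $[0,x_1)$, $f<0$ on $(x_1,x_2)$, and $f>0$ on $(x_2,\infty)$. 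Thus $\{x\ge 0:\,f(x)\ge0\}=[0,x_1]\cup[x_2,\infty)$, and $(x_1,x_2)$ is a \emph{forbidden gap} that $\Theta$ can never enter.

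Next I would pin down the initial position and propagate it. The second hypothesis states $\Theta(0)<x_*$; combined with $f(\Theta(0))\ge0$ and $f<0$ on $(x_1,x_*)$, this forces $\Theta(0)\le x_1$. The heart of the argument is then a standard continuity step. Let $I=\{t\in[0,T):\Theta(t)\le x_1\}$. It is nonempty since $0\in I$; it is closed in $[0,T)$ by continuity of $\Theta$; and it is open, because if $\Theta(t_0)\le x_1<x_2$ then $\Theta<x_2$ on a neighborhood of $t_0$, on which $f(\Theta)\ge0$ together with $\Theta<x_2$ forces $\Theta\le x_1$. Since $[0,T)$ is connected, $I=[0,T)$; equivalently, a continuous trajectory starting in $[0,x_1]$ cannot reach $[x_2,\infty)$ without crossing the gap $(x_1,x_2)$, where $f(\Theta)<0$ would violate the hypothesis. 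Hence $\Theta(t)\le x_1$ for all $t\in[0,T)$.

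Finally I would convert the bound $\Theta\le x_1$ into the stated one. Since $x_1<x_*$, we have $c_2x_1^{\nu-1}<c_2x_*^{\nu-1}=\tfrac1\nu$, so the identity $x_1=c_1+c_2x_1^{\nu}$ gives $x_1<c_1+\tfrac1\nu x_1$, that is $x_1<\frac{\nu}{\nu-1}c_1$, which yields the conclusion. I do not expect a genuine obstacle here; the only points requiring care are verifying the two-root structure of $f$ (where the first hypothesis enters sharply, guaranteeing $f(x_*)<0$) and the openness step in the continuity argument, which relies on the separation of $[0,x_1]$ from $[x_2,\infty)$ by the gap on which $f$ is strictly negative.
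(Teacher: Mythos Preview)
Your argument is correct and is exactly the standard bootstrap/continuity proof. Note, however, that the paper does \emph{not} supply its own proof of this lemma: it merely states the result and cites references, so there is no in-paper proof to compare against. Your approach---analyzing the sign of $f(x)=c_1+c_2x^\nu-x$, identifying the unique minimum at $x_*=(\nu c_2)^{-1/(\nu-1)}$, using the first hypothesis to force $f(x_*)<0$ and hence two roots $x_1<x_*<x_2$, and then running the connectedness argument to trap $\Theta$ in $[0,x_1]$---is precisely the argument one finds in the cited sources. You also correctly read the evident typo in the statement (the symbol $\alpha$ in the condition on $c_1$ should be $\nu$), since otherwise the hypothesis is meaningless and your computation $f(x_*)=c_1-(1-\tfrac{1}{\nu})x_*$ would not match.
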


Recall the conserved energy of the system is
\[
H=\frac{1}{2}\sum_{i=1}^{2}a_{i}^{2}\gamma_\Omega\left(z_{i},z_{i}\right)+a_{1}a_{2}\left(\gamma_\Omega\left(z_{1},z_{2}\right)+\frac{1}{4\pi}\ln\left(\left|z_{1}-z_{2}\right|^{2}\right)\right).
\] Applying smooth mapping $\tilde{H}=e^{\frac{4\pi}{a_{1}a_{2}}H}$
and Lemma $\text{\ref{lem:gamma}}$ yield
\begin{align}
    \tilde{H} & =\left|z_{1}-z_{2}\right|^{2}e^{2\pi\left(2\gamma_\Omega\left(z_{1},z_{2}\right)+\frac{a_{1}}{a_{2}}\gamma_\Omega\left(z_{1},z_{1}\right)+\frac{a_{2}}{a_{1}}\gamma_\Omega\left(z_{2},z_{2}\right)\right)} \nonumber \\
    &=\left|\xi\right|^{2}e^{K_{0}+P\left(z_1,z_2,\bar{z}_1,\bar{z}_2\right)+K_{1}\left(E_{\gamma_\Omega}\left(z_1,z_2\right)+E_{\gamma_\Omega}\left(z_1,z_1\right)+E_{\gamma_\Omega}\left(z_2,z_2\right)\right)} \label{tilde_quantity}
\end{align}
where $K_0,K_1$ are two positive constants only depending on $\Omega$ and $P\left(z_1,z_2,\bar{z}_1,\bar{z}_2\right)$ is a quadratic form. At $t=0$, $P$ is controlled by 
\[
|P\left(z_1,z_2,\bar{z}_1,\bar{z}_2\right)|\le C\max\left\{|z_1(0)|,|z_2(0)|\right\}^2\le C\varepsilon^2,
\]
and the error terms are all bounded by
\[
\left|E_{\gamma_\Omega}\left(t=0\right)\right|\le C\max\left\{|z_1(0)|,|z_2(0)|\right\}^{3} \le C\varepsilon^{3}.
\]
So initially, we have
\[
\tilde{H}\left(t=0\right)\le 4\varepsilon^{2}e^{K_{0}+C\varepsilon^{2}}.
\]
For $t<t_{\varepsilon,\beta}$ it holds that $|z_k(t)|\le \varepsilon^\beta$ so that 
\[
\tilde{H}(t) \ge |\xi(t)|^2e^{K_0-C\varepsilon^{2\beta}}.
\]
The consevsation of $\tilde{H}$ gives 
\[
|\xi(t)|^2e^{K_0-C\varepsilon^{2\beta}} \le 4\varepsilon^2e^{K_0+C\varepsilon^2},
\]
which implies
\begin{equation}
    \label{eq:tH_estimate_strip}
    |\xi(t)|\le 2\varepsilon e^{\frac{C}{2}(\varepsilon^2+\varepsilon^{2\beta})}\le 2\varepsilon e^{C\varepsilon^{2\beta}}.
\end{equation}
Recall the full equation for $B$ is of the following form
\begin{align*}
\frac{d}{dt}B= & i\frac{a_{1}^{2}}{2\left(a_{1}+a_{2}\right)}\nabla\tilde{\gamma}_\Omega\left(z_{1}\right)+i\frac{a_{2}^{2}}{2\left(a_{1}+a_{2}\right)}\nabla\tilde{\gamma}_\Omega\left(z_{2}\right)\\
 & +\frac{ia_{1}a_{2}}{\left(a_{1}+a_{2}\right)}\left(\nabla_{1}\gamma_\Omega\left(z_{1},z_{2}\right)+\nabla_{1}\gamma_\Omega\left(z_{2},z_{1}\right)\right).
\end{align*}
Applying Lemma $\text{\ref{lem:grad_gamma}}$ yields
\begin{equation}
\frac{d}{dt}B=i\left(a_{1}+a_{2}\right)\left(\frac{3}{2}\bar{c}_{0}\bar{B}+c_{1}B\right)+\mathcal{O}\left(\left|z_{1}\right|^{2}+\left|z_{2}\right|^{2}\right).\label{eq:B}
\end{equation}
Considering $B$ as a vector and letting $c_{0}^{R}$ and $c_{0}^{I}$
be the real and imaginary part of $c_{0}$, the solution to $\left(\text{\ref{eq:B}}\right)$
is
\begin{equation}
B\left(t\right)=e^{\mathcal{R}t} B\left(0\right)+\int_{0}^{t}e^{\left(t-s\right)\mathcal{R}}\mathcal{O}\left(\left|z_{1}\left(s\right)\right|^{2}+\left|z_{2}\left(s\right)\right|^{2}\right)ds,\label{eq:reason_B_rotation}
\end{equation}
where $\mathcal{R}$ is the following matrix:
\[
\mathcal{R}=\left(\begin{array}{cc}
\frac{3}{2}\left(a_{1}+a_{2}\right)c_{0}^{I} & \left(a_{1}+a_{2}\right)\left(\frac{3}{2}c_{0}^{R}-c_{1}\right)\\
\left(a_{1}+a_{2}\right)\left(\frac{3}{2}c_{0}^{R}+c_{1}\right) & -\frac{3}{2}\left(a_{1}+a_{2}\right)c_{0}^{I}
\end{array}\right),
\]
whose eigenvalues are $\pm\frac{a_{1}+a_{2}}{2}\sqrt{9\left|c_{0}\right|^{2}-4c_{1}^{2}}$. In critical case, $2c_{1}=3\left|c_{0}\right|$ which is exactly the condition $2\left|\varphi'(0)\right|^{3}=\left|\varphi'''(0)\right|$.
Therefore the linear operator $e^{\mathcal{R}t}$ reads
\[
e^{\mathcal{R}t} = I+\mathcal{R}t=\left(\begin{array}{cc}
1 +\frac{3(a_1+a_2)}{2}c_0^It& (a_1+a_2)(\frac{3}{2}c_0^R-c_1)t\\
(a_1+a_2)(\frac{3}{2}c_0^R+c_1)t & 1 -\frac{3(a_1+a_2)}{2}c_0^It
\end{array}\right).
\]
Plugging it into \eqref{eq:reason_B_rotation} and applying the estimate of $\left|\xi\right|$ \eqref{eq:tH_estimate_strip} yield
\begin{align*}
    \left|B\left(t\right)\right|	& \le C\left|B\left(0\right)\right|\left(1+t\right)+C\sup_{0\le s\le t}\left(\left|z_{1}\left(s\right)\right|^{2}+\left|z_{2}\left(s\right)\right|^{2}\right)\int_{0}^{t}1+\left(t-s\right)ds\\
	&\le C\left|B\left(0\right)\right|\left(1+t\right)+C\left(t+t^{2}\right)\sup_{0\le s\le t}\left(\left|z_{1}\left(s\right)\right|^{2}+\left|z_{2}\left(s\right)\right|^{2}\right) \\
    & \le Ct\varepsilon + Ct^2 \sup_{0\le s\le t}\left(\left|B\left(s\right)\right|^{2}+\left|\xi\left(s\right)\right|^{2}\right) \\
    & \le Ct\varepsilon+4C\varepsilon^{2}t^{2}e^{2C\varepsilon^{2\beta}}+Ct^2\sup_{0\le s\le t}\left|B\left(s\right)\right|^{2}.
\end{align*}
Let $f(t)=\sup_{0\le s\le t}\left|B\left(s\right)\right|$. Taking the supremum in time in the above inequality gives
\begin{align*}
    f\left(t\right)	& \le Ct\varepsilon+4C\varepsilon^{2}t^{2}e^{2C\varepsilon^{2\beta}}+Ct^2f^2\left(t\right)\\
    & \le C\left(t+t^{2}e^{C\varepsilon^{2\beta}}\right)\varepsilon + Ct^2f^2\left(t\right) \\
    & \le 2CT^2\varepsilon + CT^2 f^2\left(t\right),
\end{align*}
where $T>t$ to be fixed. If $T=\varepsilon^{-\alpha}$ with $\alpha<\frac{1}{4}$ and the conditions of the {\it bootstrap} Lemma (taking $\nu=2$) are satisfied
\begin{equation}
     f(0)=|B(0)|\le \varepsilon < \frac{1}{2CT^{2}}, \label{eq:bootstrap_cond_1}
\end{equation}
\begin{equation}
     2CT^{2}\varepsilon<\frac{1}{4CT^{2}}, \label{eq:bootstrap_cond_2}
\end{equation}
then we get for any $0\leq t\leq \varepsilon^{-\alpha}$, 
\begin{equation}
\label{eq:B_estimate_strip}
|B(t)|\le f(t)\leq 2CT^{2}\varepsilon\leq\frac{1}{2}\varepsilon^{1-2\alpha}.
\end{equation}
Furthermore, if in addition $\alpha<\frac{1-\beta}{2}$ then \eqref{eq:B_estimate_strip} implies $|B(t)| < \frac{1}{2}\varepsilon^{\beta}$.
In summary, let $\alpha=\min\left(\frac{1-\beta}{2},\frac{1}{4}\right)$ and if bootstrap condition \eqref{eq:bootstrap_cond_1}, \eqref{eq:bootstrap_cond_2} are satisfied, then we obtain that for $0\leq t\leq \varepsilon^{-\alpha}$, it holds that 
\[
|B(t)| < \frac{1}{2}\varepsilon^{\beta}.
\]
Putting \eqref{eq:tH_estimate_strip} and \eqref{eq:B_estimate_strip} together yields the estimate of $|z_k(t)|$:
\[
|z_k(t)|\le |B(t)|+\frac{\max\left(\left|a_{1}\right|,\left|a_{2}\right|\right)}{\left|a_{1}+a_{2}\right|}|\xi(t)|\le\frac{1}{2}\varepsilon^\beta + C\varepsilon e^{C\varepsilon^{2\beta}}.
\]
We expect the above upper bound is less than $\varepsilon^\beta$, the bootstrap argument can apply and we arrive at the conclusion $t_{\varepsilon, \beta}>C\varepsilon^{-\alpha}$. At the very last, it is necessary to find the small $\varepsilon$ such that bootstrap condition \eqref{eq:bootstrap_cond_1}, \eqref{eq:bootstrap_cond_2} and 
\[
C\varepsilon e^{C\varepsilon^{2\beta}} <\frac{1}{2}\varepsilon^\beta
\]
are satisfied. These conditions are equivalent to 
\begin{align*}
    \varepsilon^{1-2\alpha} & < \frac{1}{2C}, \\
    \varepsilon^{1-4\alpha} & < \frac{1}{8C^2}, \\
    2C\varepsilon^{1-\beta}e^{C\varepsilon^{2\beta}} & < 1.
\end{align*}
The first two are obvious for small $\varepsilon$ and the last one is true as long as there exists $0<\varepsilon\le\varepsilon_0$ such that
\[
2C\varepsilon_0^{1-\beta}e^{C\varepsilon_0^{2\beta}}<1
\]
holds. Obviously, such small $\varepsilon_0$ exists. 
\section{Examples}
We give some examples of domains with stable, unstable stationary points. Recall that the stable stationary point should satisfy 
\[
|\varphi'''(0)| < 2|\varphi'(0)|^3.
\]
The critical case will satisfy the equality. 
\begin{figure}[ht]
\centering
\includegraphics[scale=0.5]{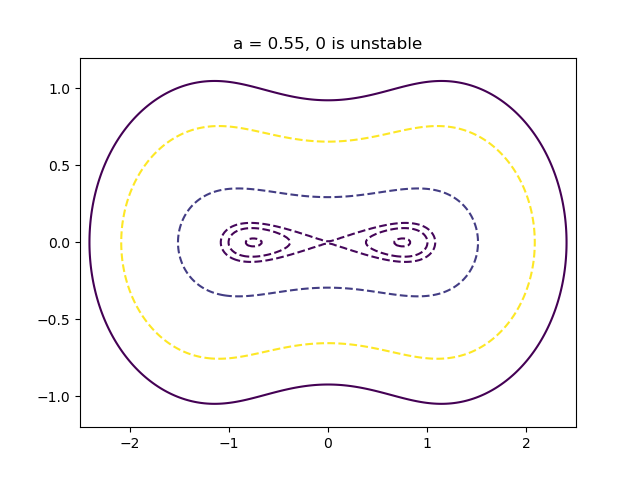}\includegraphics[scale=0.5]{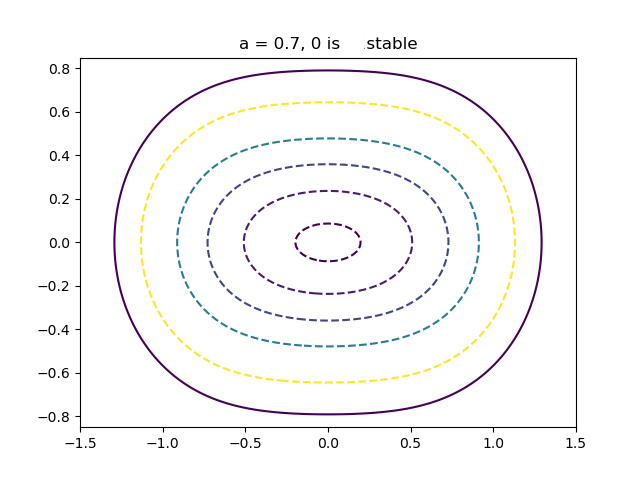}
\caption{$\varphi(z) = a\left(\tan\left(iz\right)+\tan\left(i\frac{z}{2}\right)\right)$, $0$ is the unstable stationary point for $\frac{1}{2}<a<\frac{1}{\sqrt{3}}$ and stable stationary point for $a>\frac{1}{\sqrt{3}}$. \label{fig:bounded_unstable_stable}}
\end{figure}
\begin{itemize}
    \item Critical case, the infinite strip between $Im(z)=\pm 1$. The conformal map 
\[
\varphi(z) = \tan\left(i\frac{\pi z}{4}\right)
\]
maps the strip into $\mathbb{D}$. It is easy to verify $|\varphi'''(0)|=2|\varphi'(0)|^3$. Actually, the whole real line is stationary and all satisfy the critical stable condition.
\item Bounded domains with stable or unstable stationary points. The examples are inspired by the previous critical case. Let the conformal map be 
\[
\varphi(z)=a\left(\tan\left(iz\right)+\tan\left(i\frac{z}{2}\right)\right),
\]
where $a>\frac{1}{2}$ so that $\varphi$ maps a bounded domain $\Omega\to\mathbb{D}$. It is easy to check $\varphi(0)=0,\varphi''(0)=0$ so $0$ is the stationary point. And it holds that 
\[
|\varphi'''(0)|-2|\varphi'(0)|^3=\frac{9a(1-3a^2)}{4}.
\]
Therefore for $\frac{1}{2}<a<\frac{1}{\sqrt{3}}$, $0$ is unstable while for $a>\frac{1}{\sqrt{3}}$, $0$ is stable. In addition, when $\frac{1}{2}<a<\frac{1}{\sqrt{3}}$ the domain has another two stable stationary points on the real axis (see Figure \ref{fig:bounded_unstable_stable}). 
\item Martin Donati \cite{Donati2024} found a family of biconvex hexagonal domains with $0$ the stationary point. The domains are given by $\varphi^{-1}:\mathbb{D}\to \Omega_\delta$ such that $\varphi^{-1}(0)=0$ and
\[
\left(\varphi^{-1}\right)'(z)=\frac{(z^2-1)^{3\delta-1}}{(z^6-1)^\delta}.
\]
It is shown that $0$ is stable for $\frac{1}{2}\le \delta< \frac{2}{3}$, critical for $\delta=\frac{2}{3}$ and unstable for $\frac{2}{3}<\delta< 1$.
\end{itemize}
 
\section{Acknowledgement}

Shengyi Shen was supported by NSFC Grant(12201324). Ruixun Qin was
partially supported by Ningbo University scholarship.

\appendix

\section{Appendix}

For the sake of completeness, we list some results on the power series expansion
of holomorphic functions and the control of their remainders (see \cite{hormander1973introduction}) as well as the KAM theorem.
Let $\Omega_{j}\subset\mathbb R^2\sim\mathbb{C},j=1,\cdots$ be $n$ bounded domains in $\mathbb R^2\sim\mathbb C$, each with $C^{1}$ Jordan closed curve boundary. Let $\tilde{\Omega}:=\Omega_{1}\times \Omega_{2}\times\cdots\times \Omega_{n}$
and $f:\tilde{\Omega}\to\mathbb{C}$ be holomorphic and denote $\mathbf{z}=\left(z_{1},\cdots,z_{n}\right)\in\mathbb{C}^{n}$
the complex vector. By the Cauchy-Riemann equation, $f$ should be independent
from $\bar{z}_{j}$. By repeating Cauchy's integral formula of one
variable repeatedly one has the representation formula for any derivatives in terms of contour integrals of $f$: for $\mathbf{z}\in\tilde{\Omega}$
\begin{equation}
\partial^{\alpha}f\left(\mathbf{z}\right)=\frac{\alpha!}{\left(2\pi i\right)^{n}}\int_{\partial \Omega_{1}}\cdots\int_{\partial \Omega_{n}}\frac{f\left(\mathbf{w}\right)}{\left(\mathbf{w}-\mathbf{z}\right)^{\alpha+\left(1,\cdots,1\right)}}d\mathbf{w},\label{eq:cauchy_integral_formula}
\end{equation}
where $\alpha$ is the multi-index. Furthermore if $\mathbf{z}$ is in a polydisc of $\tilde{\Omega}$, that is
\[
\mathbf{z}\in\prod_{i=1}^nD\left(z_j,r_j\right)\subset\tilde{\Omega},
\]
then \eqref{eq:cauchy_integral_formula} becomes
\[
\partial^{\alpha}f\left(\mathbf{z}\right)=\frac{\alpha!}{\left(2\pi i\right)^{n}}\int_{\partial D\left(z_1,r_1\right)}\cdots\int_{\partial{D\left(z_n,r_n\right)}}\frac{f\left(\mathbf{w}\right)}{\left(\mathbf{w}-\mathbf{z}\right)^{\alpha+\left(1,\cdots,1\right)}}d\mathbf{w}.
\]
Taking $M=\max_{\mathbf{w}\in{\prod_{i=1}^n\partial D\left(z_j,r_j\right)}}\left|f\left(\mathbf{w}\right)\right|$ one obtains the multi-variable version of Cauchy's estimate \cite{ebeling2007functions}: 
\begin{equation}
\left|\partial^{\alpha}f\left(\mathbf{z}\right)\right|\le\frac{M\alpha!}{\prod_{i=1}^{n}r_{i}^{\alpha_{i}}}.\label{eq:cauchy_estimate}
\end{equation}
From Cauchy's integral formula \eqref{eq:cauchy_integral_formula},
one can derive that $f$ can be uniquely expanded to the power series.
In addition, the remainder estimate is done with the help of \eqref{eq:cauchy_estimate}. 
\begin{prop}
\label{prop:cauchy=000020estimate} Let $\mathbf{c}\in\tilde{\Omega}$
be an interior point such that each disk centered at $c_{j}$ with radius $r_{j}$ satisfies $D\left(c_{j},r_{j}\right)\subset \Omega_{j}$. A holomorphic
function $f$ on $\tilde{\Omega}$ can be uniquely expanded at point
$\mathbf{c}$:
\[
f\left(\mathbf{z}\right)=\sum_{\left|\alpha\right|=0}^{\infty}\frac{\partial^{\alpha}f\left(\mathbf{c}\right)}{\alpha!}\left(\mathbf{z}-\mathbf{c}\right)^{\alpha}.
\]
Furthermore, if $\mathbf{z}\in\prod_{i=1}^nD\left(z_j,r_j\right)$ one can truncate the power series at order $k$:
\[
f\left(\mathbf{z}\right)=\sum_{\left|\alpha\right|=0}^{k}\frac{\partial^{\alpha}f\left(\mathbf{c}\right)}{\alpha!}\left(\mathbf{z}-\mathbf{c}\right)^{\alpha}+R_{k}\left(\mathbf{z}\right),
\]
the remainder is controlled by 
\[
\left|R_{k}\left(\mathbf{z}\right)\right|\le M\sum_{\left|\alpha\right|\ge k+1}\prod_{i=1}^{n}\left|\frac{z_{i}-c_{i}}{r_{i}}\right|^{\alpha_{i}},
\]
where 
\[
M=\max_{\mathbf{w}\in\prod_{i=1}^nD\left(z_j,r_j\right)}\left|f\left(\mathbf{w}\right)\right|.
\]
\end{prop}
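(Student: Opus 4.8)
The plan is to derive both the expansion and the remainder bound directly from the iterated Cauchy integral formula \eqref{eq:cauchy_integral_formula} together with the Cauchy estimate \eqref{eq:cauchy_estimate}, which are already in hand. First I would fix the polydisc $P=\prod_{i=1}^n D(c_i,r_i)\subset\tilde\Omega$ and, for $\mathbf z\in P$, represent $f(\mathbf z)$ as the $n$-fold contour integral over the distinguished boundary $\prod_{i=1}^n\partial D(c_i,r_i)$. This is legitimate because $f$ is holomorphic in each variable separately: the Cauchy--Riemann equations make $f$ independent of each $\bar z_j$, so the one-variable Cauchy integral formula may be applied successively in $z_1,\dots,z_n$, exactly as indicated in the derivation of \eqref{eq:cauchy_integral_formula}.

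The key step is to expand the Cauchy kernel. For each factor I would write
$$
\frac{1}{w_i-z_i}=\frac{1}{w_i-c_i}\sum_{\alpha_i\ge0}\left(\frac{z_i-c_i}{w_i-c_i}\right)^{\alpha_i},
$$
which converges geometrically since $|z_i-c_i|<r_i=|w_i-c_i|$ on the contour. Taking the product over $i$ and invoking absolute and uniform convergence on the compact torus $\prod_{i=1}^n\partial D(c_i,r_i)$, I would interchange the $n$-fold sum with the integral. Matching the resulting integrals against \eqref{eq:cauchy_integral_formula} identifies the coefficient of $(\mathbf z-\mathbf c)^\alpha$ as precisely $\partial^\alpha f(\mathbf c)/\alpha!$, which yields the stated power series. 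Uniqueness then follows because differentiating the series $\alpha$ times and evaluating at $\mathbf c$ recovers $\partial^\alpha f(\mathbf c)$, so any representing series must carry these coefficients.

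For the remainder I would split the absolutely convergent series at total degree $k$, so that $R_k(\mathbf z)=\sum_{|\alpha|\ge k+1}\frac{\partial^\alpha f(\mathbf c)}{\alpha!}(\mathbf z-\mathbf c)^\alpha$. Applying \eqref{eq:cauchy_estimate} to each coefficient gives $\left|\frac{\partial^\alpha f(\mathbf c)}{\alpha!}\right|\le \frac{M}{\prod_i r_i^{\alpha_i}}$, whence each term is dominated by $M\prod_i\left|\frac{z_i-c_i}{r_i}\right|^{\alpha_i}$, and summing the tail produces exactly
$$
|R_k(\mathbf z)|\le M\sum_{|\alpha|\ge k+1}\prod_{i=1}^n\left|\frac{z_i-c_i}{r_i}\right|^{\alpha_i}.
$$
Here I would note that the constant $M$ furnished by \eqref{eq:cauchy_estimate}, a maximum over the distinguished boundary, coincides with the maximum over the closed polydisc appearing in the statement, by the maximum modulus principle, so the two normalizations agree.

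The main obstacle is the rigorous justification of the term-by-term integration in the several-variable setting: one must confirm that the product of the $n$ geometric series converges absolutely and uniformly on the distinguished boundary, so that the interchange of the multi-index summation with the iterated contour integral is valid. This hinges on the strict inequality $|z_i-c_i|<r_i$ for every $i$, which is exactly the hypothesis $\mathbf z\in P$; once this uniform domination is secured, the remainder of the argument is routine bookkeeping with multi-indices.
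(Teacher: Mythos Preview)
Your proposal is correct and follows precisely the approach the paper indicates: the paper itself does not give a detailed proof of this proposition but simply notes that the expansion is derived from Cauchy's integral formula \eqref{eq:cauchy_integral_formula} and the remainder bound from Cauchy's estimate \eqref{eq:cauchy_estimate}, which is exactly what you carry out. Your additional remark that the maximum over the distinguished boundary agrees with the maximum over the closed polydisc via the maximum modulus principle is a useful clarification beyond what the paper spells out.
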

Before we state the classic KAM theorem, the norm that is used to measure the perturbation is 
\[
\norm{f}_{A_{\rho,\sigma}(I^*)} := \sup_{(I,\theta)\in A_{\rho,\sigma}(I^*)} |f(I,\theta)|,
\]
where
\[
A_{\rho,\sigma}(I^*) := \left\{(I,\theta)\big| |I-I^*|<\rho,\Im(\theta) < \sigma \right\}.
\]
The classic KAM theorem is the following (see ref \cite{Wayne1994}, \cite{de2003tutorial}, \cite{Kolmogorov1954})
\begin{thm}
\label{thm:KAM}Let the Hamiltonian $H\left(I,\theta\right):\mathbb{R}^{n}\times\mathbb{T}^{n}\to\mathbb{R}$
be an analytic function such that 
\[
H\left(I,\theta\right)=h\left(I\right)+R\left(I,\theta\right),
\]
and $\omega^{*}=\nabla h\left(I^{*}\right)\in\mathbb{R}^{n}$ for
some $I^{*}$ satisfies the Diophantine condition: for all $k\in \mathbb{Z}^n-\{0\}$ there exists $\nu>0$ such that
\[
|\omega^*\cdot k|\ge \frac{C}{|k|^\nu}.
\]
Assuming that for $I$ in a neighborhood of $I^{*}$
we have 
\[
\left|\det\frac{\partial^{2}h}{\partial I^{2}}\left(I\right)\right|\ge\kappa>0,
\]
then if $\norm R_{A_{\rho,\sigma}\left(I^{*}\right)}$ is small enough,
the Hamiltonian system for $H\left(I,\theta\right)$ admits a quasi-periodic
solution with frequency $\omega^{*}$ and is arbitrary close to the
torus $\left\{ I^{*}\right\} \times\mathbb{T}^{n}$.
\end{thm}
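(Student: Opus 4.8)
The plan is to establish this classical statement by Kolmogorov's superconvergent (Newton-type) iteration, constructing a sequence of real-analytic symplectic near-identity transformations whose composition conjugates $H$ to a Hamiltonian possessing an invariant torus that carries the linear flow with frequency $\omega^*$. I would first translate so that the target action sits at a fixed reference point and split $R=R(I^*,\theta)+[R(I,\theta)-R(I^*,\theta)]$; the second piece is $O(|I-I^*|)$ and gets absorbed into the normal form, while the first is what a single step must remove to quadratic order. The governing principle is that one KAM step replaces a perturbation of size $\varepsilon$ on an analyticity strip of width $\sigma$ by a new perturbation of size $O(\varepsilon^2\delta^{-a})$ on the reduced width $\sigma-\delta$, and that the quadratic gain eventually defeats the polynomial loss encoded by the small divisors.

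For a single step I would seek a near-identity canonical map generated either by a Lie series (the time-one flow of an auxiliary Hamiltonian $S(I,\theta)$) or by a generating function $I\cdot\theta+S(I,\theta)$. Expanding and collecting the terms linear in $S$, the requirement that the new angle-dependent part vanish at first order is the \emph{homological equation}
\[
\omega^*\cdot\partial_\theta S(\theta)=\langle R\rangle_\theta-R(I^*,\theta),
\]
where $\langle\cdot\rangle_\theta$ denotes the average over $\mathbb{T}^n$. Solving modewise in Fourier series gives $S_k=R_k/(i\,\omega^*\cdot k)$ for $k\neq0$, and here the \emph{Diophantine condition} $|\omega^*\cdot k|\ge C|k|^{-\nu}$ is exactly what tames the small divisors: combined with the exponential decay $|R_k|\le\norm{R}e^{-|k|\sigma}$ of the Fourier coefficients of an analytic function, it yields an estimate of the shape $\norm{S}_{\sigma-\delta}\le C\,\delta^{-(\nu+n)}\norm{R}_{\sigma}$, so $S$ remains analytic on a slightly thinner strip. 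To keep the frequency pinned at $\omega^*$ rather than letting it drift, I would use the \emph{nondegeneracy} $|\det\partial^2 h/\partial I^2|\ge\kappa$ to solve, by the implicit function theorem, for the action correction $I^*\mapsto I^*+\Delta I$ restoring $\nabla(h+\langle R\rangle)=\omega^*$; this is Kolmogorov's device and is where the twist hypothesis is indispensable.

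Iterating, I would set $\varepsilon_0=\norm{R}_{A_{\rho,\sigma}(I^*)}$, pick geometrically shrinking losses $\delta_j$ with $\sum_j\delta_j<\sigma$ so the limiting width $\sigma_\infty>0$ stays positive, and likewise control the action radii $\rho_j$ through the nondegeneracy so the frequency map remains invertible. A standard induction then gives $\varepsilon_{j+1}\le C\,\delta_j^{-a}\varepsilon_j^2$, whence $\varepsilon_j\le\varepsilon_0^{\,(3/2)^j}$ (or any fixed exponent $>1$) once $\varepsilon_0$ is small enough; the composed maps converge, in the analytic norm on $A_{\rho_\infty,\sigma_\infty}$, to a symplectic embedding $\Phi_\infty$, and the corrected actions converge to some $I_\infty^*$ close to $I^*$. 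In the limit the conjugated Hamiltonian has $\{I_\infty^*\}\times\mathbb{T}^n$ invariant with flow $\theta(t)=\theta_0+\omega^* t$; pulling this back by $\Phi_\infty$ produces the asserted quasi-periodic solution of the original system, lying within $O(\varepsilon_0)$ of $\{I^*\}\times\mathbb{T}^n$, which yields the closeness claim.

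I expect the main obstacle to be, as always in KAM theory, the quantitative control of the small-divisor losses: the factors $\delta_j^{-(\nu+n)}$ produced at each homological solve grow without bound, and the whole argument rests on checking that the doubly-exponential shrinkage of $\varepsilon_j$ from the quadratic scheme dominates these losses while the widths $\sigma_j$ and radii $\rho_j$ stay bounded below. Making the bookkeeping of these three coupled sequences consistent, so that each step's hypotheses are met and the limiting domain is nonempty, is the technical heart of the proof, and is precisely where analyticity of $H$, the Diophantine property of $\omega^*$, and the nondegeneracy $\kappa>0$ all enter together.
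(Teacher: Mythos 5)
The paper offers no proof of this statement at all---it is quoted in the appendix as the classical KAM theorem with citations to Kolmogorov, Wayne, and de la Llave---and your sketch faithfully reproduces the standard Newton-iteration argument from exactly those references: the homological equation $\omega^*\cdot\partial_\theta S=\langle R\rangle_\theta-R(I^*,\theta)$ solved modewise with small divisors tamed by the Diophantine condition together with exponential Fourier decay, the frequency pinned at $\omega^*$ via the nondegeneracy $\bigl|\det \partial^2 h/\partial I^2\bigr|\ge\kappa$ (Kolmogorov's device), and quadratic convergence $\varepsilon_{j+1}\le C\delta_j^{-a}\varepsilon_j^2$ overcoming the polynomial losses while $\sum_j\delta_j<\sigma$ keeps the limiting analyticity domain nonempty. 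Since you correctly identify all the essential ingredients and honestly flag the remaining coupled bookkeeping of $(\varepsilon_j,\delta_j,\rho_j)$ as the technical heart (which is carried out in full in the cited expositions rather than in this paper), your proposal is correct and takes essentially the same approach as the proof the paper relies on.
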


\bibliographystyle{plain}
\bibliography{two_point_votex}

\end{document}